\providecommand*{\noopsort}[1]{}
\newtheoremstyle{definitionbf}%
  {}{}{}{}{\bfseries}{.}{ }%
  {\thmname{#1}\thmnumber{ #2}\thmnote{ (#3)}}
\newtheoremstyle{plainbf}%
  {}{}{\itshape}{}{\bfseries}{.}{ }%
  {\thmname{#1}\thmnumber{ #2}\thmnote{ (#3)}}
\newtheoremstyle{bfnoteonly}%
{}{}%
{\itshape}{}%
{\bfseries}{.}%
{ }%
{\thmnote{#3}}
\theoremstyle{plainbf}
\newtheorem{theorem}{Theorem} 
\newtheorem{lemma}[theorem]{Lemma}
\newtheorem{proposition}[theorem]{Proposition}
\theoremstyle{definitionbf} 
\newtheorem{definition}{Definition} 
\theoremstyle{remark}
\theoremstyle{bfnoteonly}
\newtheorem*{model}{Model}
\newcommand{\com}[1]{} 
\newcommand{\RR}{\mathbb{R}}
\newcommand{\EE}{\mathbb{E}}
\newcommand{\indi}{\mathbb{1}}
\DeclareMathOperator{\im}{Im}
\DeclareMathOperator{\var}{var}
\newcommand{\norm}[1]{\left\lVert#1\right\rVert}
\newcommand{\Ddisp}{D_\mathrm{disp}}
\newcommand{\Qavg}{Q_\mathrm{avg}}
\newcommand{\Qdti}{Q_\mathrm{DTI}}
\newcommand{\Vdti}{V_\mathrm{DTI}}
\newcommand{\Pidti}{\Pi_\mathrm{DTI}}
\newcommand{\ha}{\hat{\alpha}}
\newcommand{\hb}{\hat{b}}
\newcommand{\htt}{\hat{t}}
\newcommand{\as}{\alpha^*}
\newcommand{\bs}{b^*}
\newcommand{\ts}{t^*}
\newcommand{\bes}{\beta^*}
\newcommand{\sk}{^{(k)}}
\definecolor{changecolor}{RGB}{192,64,0}
\title{
A consistent least-squares criterion for calibrating
edge lengths in phylogenetic networks}
\date{}
 \author{\firstname{Jingcheng} \lastname{Xu}}
 \address{Department of Statistics, University of Wisconsin - Madison, USA.}
 \email{xjc@stat.wisc.edu}
 \author{\firstname{C\'ecile} \lastname{An\'e}}
 \address{Department of Statistics, University of Wisconsin - Madison, USA.}
 \address{Department of Botany, University of Wisconsin - Madison, USA.}
 \email{cecile.ane@wisc.edu}
\keywords{
  admixture graph,
  rate variation,
  genetic distance,
  molecular clock,
  relaxed clock,
  ERaBLE,
  pacemaker,
  scaled displayed trees}
\begin{abstract}
  In phylogenetic networks, it is desirable to estimate edge lengths in
  substitutions per site or calendar time. Yet,
  there is a lack of scalable methods that provide such estimates.  Here we
  consider the problem of obtaining edge length estimates from genetic
  distances, in the presence of rate variation across genes and lineages, when
  the network topology is known.  We propose a novel criterion based on
  least-squares that is both consistent and computationally tractable.  The crux
  of our approach is to decompose the genetic distances into two parts, one of
  which is invariant across displayed trees of the network.  The scaled genetic
  distances are then fitted to the invariant part, while the average scaled
  genetic distances are fitted to the non-invariant part.  We show that this
  criterion is consistent provided that there exists a tree path
  between some pair of tips in the network,
  and that edge lengths in the network are identifiable from average distances.
  We also provide a constrained variant of this criterion assuming a molecular
  clock, which can be used to obtain relative edge lengths in calendar time.
\end{abstract}
\begin{document}

\maketitle

\section{Introduction}

Phylogenetic networks are models of evolutionary history that extend
phylogenetic trees, by allowing reticulations where one population inherits
genetic material from multiple parent populations.
Reticulations capture the genomic mix that results from
biological processes such gene flow, admixture, hybridization, or
horizontal gene transfer.
Estimation of phylogenetic networks is hard.  Full likelihood and Bayesian
methods, which use genetic sequences as input, do not scale beyond a handful of
taxa.  Methods that use various summaries of the genetic data are faster, but
often do not produce edge lengths in terms of mutations or calendar time.  For
example, \texttt{SNaQ} \citep{SolisLemus2016}, and the
pseudolikelihood method in \texttt{PhyloNet} \citep{yu15} output lengths in coalescent
units, and for a subset of edges only.
Methods using allele frequencies, 
such as \texttt{TreeMix} \citep{pickrell12} and \texttt{ADMIXTOOLS} \citep{patterson12},
provide lengths in ``drift'' units.

It is often desirable to estimate edge lengths in substitutions per
site or calendar time.
The task of estimating edge lengths in calendar time
is traditionally called ``calibration'', and has a long
history of method development in the context when the phylogeny is a species tree.
\citet{r8s} developed the pioneer method \texttt{r8s}
to calibrate trees in absolute time units, allowing for variation of the
molecular substitution rate across lineages to relax the
so-called ``molecular clock'' assumption.
\texttt{BEAST} \citep{beast} has state-of-the-art and
widely-used 
methods for calibrating trees.
Calibrated species trees enable a wide array of downstream analyses,
such as the estimation of 
speciation and extinction rates,
the identification of periods with rapid speciation or extinction across
multiple groups, and the association of diversification rates with
species traits, or geographic and environmental variables.

Challenges with calibrating a phylogenetic tree includes rate variation across
lineages --- violating the molecular clock assumption --- and rate variation
across genes.
Rate variation across lineages is influenced by factors like differences in
generation time. Rate variation across genes can be driven by factors such
as differences in selective pressures.
Some Bayesian methods, such as \texttt{BEAST}, have the capability to handle both types of variation, but
they are computationally heavy and often require sampling a small
number of loci for analysis, to reach convergence and feasible running times.
When the tree topology is known, \texttt{ERaBLE} \citep{binet16} provides a
fast way to estimate edge lengths in substitutions per site using genetic
distances as input.
It focuses on accounting for rate variations across genes, and also
allows for rate variation across lineages by not constraining the output
timed tree to be ultrametric.

The challenges of rate variation across
lineages and across genes also apply to networks.
In addition, new challenges arise, such as
(1) the lack of edge length estimates in substitutions per site, as mentioned above,
and (2) the discordance between gene tree topologies due to reticulations.
Several Bayesian methods for network inference
can infer a rooted network with branch lengths in
expected substitutions per site or in relative time,
such as the \texttt{SpeciesNetwork} package in \texttt{BEAST2} \citep{zhang17}
and the \texttt{MCMC\_SEQ} method in \texttt{PhyloNet} \citep{wen17}.
However, the computational burden remains, and
currently there is a lack of fast methods for the task of calibrating
a phylogenetic network with reticulations.

In this work, we adapt the optimization objective used in \texttt{ERaBLE} to
networks, with the goal of obtaining a method that is both computationally
tractable and consistent.
To address challenge (1) we propose to solve two calibration problems.
The first aims to calibrate the network's edge lengths in
substitutions per site, similar to \texttt{ERaBLE},
with the advantage of not assuming any molecular clock,
treating the network topology as given.
The second problem aims to calibrate the network in units
proportional to time, assuming a molecular clock, also on a fixed network topology.
Challenge (2) can be quite extensive as many groups show high levels of
discordance between gene trees, and gene trees can be difficult to estimate
accurately if mutations are too few or recurrent.
To address this challenge, we propose to avoid gene tree estimation and instead
consider using pairwise distances across multiple genes (which may have
different but unknown trees) to calibrate the network.
Throughout, we allow genes to evolve at different rates.

Prior approaches for networks, such as in \citet{bastide18} and
\citet{karimi19}, use
heuristics to normalize the pairwise distances from each gene
to remove rate variation across genes, and then find node
ages that best fit the resulting average pairwise distances.

In this work we model the observed pairwise distance on a gene as the path
distance on the associated displayed tree of the species network with an
additive error term, which is then scaled by a gene-specific rate.  This model,
which we call the \emph{scaled displayed tree model}, is similar to
the universal pacemaker model in \citet{2012Snir-pacemaker}.

Our proposed method is based on the observation that given a network topology,
there is a linear subspace, which we denote by $\Vdti$, such that pairwise
distance vectors are identical across displayed trees
when projected onto $\Vdti$.
Based on this, the method finds gene rates and network parameters
(edge lengths or node ages) that minimizes a
composite objective $\Qdti$ with two components.
The first component fits, on $\Vdti^\perp$, the average scaled
pairwise distances to the expected average distances given the network
parameters.
The second component fits, on $\Vdti$, the scaled pairwise distances
from each gene to the expected average pairwise distances.
Notably, when the network topology is a tree, the first component
disappears and the method reduces to a version of \texttt{ERaBLE}. 
The proposed criterion $\Qdti$ yields a quadratic programming problem
and can be readily solved by standard methods.
We also show that this method is consistent, in the sense that
if gene-specific distances are estimated
consistently (with noise going to zero) 
then the estimated relative branch lengths and gene rates
converge to the true parameters.

\citet{2024Arasti-TCMM} introduce optimization
problems similar to \texttt{ERaBLE} and to our criterion $\Qdti$. In particular,
their Problem 2, called \texttt{TCMM} (Topology-Constrained Metric Matching)
uses a metric gene tree as input and fits edge lengths
on a given fixed species tree topology to minimize the $\ell_2$-norm between
the gene tree distances and the species tree distances.
This ``per-gene'' approach returns multiple calibrations of the same species tree
when multiple gene trees are available.
These calibrations are summarized as a second step.
However, this approach does not allow for non-tree networks.  It
also differs conceptually from our approach, as we do not infer individual
metric gene trees as an intermediate step.
\citet{bevan05}, a precursor to \texttt{ERaBLE} 
focused on estimating relative rates across genes,
contrasts these two approaches.
Their \texttt{DistR} method could use as input either patristic distances,
which are from gene trees estimated from sequences, or pairwise distances, 
which are directly estimated from sequences.  The approach taken in
\citet{2024Arasti-TCMM} is similar to using patristic distances,
requiring an intermediate step of inferring individual metric gene trees,
unlike in our approach. 
When the individual metric gene
trees can be estimated reliably, such an intermediate step could be beneficial
as it in effect denoises the input.  On the other hand, when the gene trees
are difficult to estimate, our approach has the potential to be more robust.

\section{Notations and problem formulation}

\subsection{Basic notations}

We use definitions from \citet{2023XuAne_identifiability} with some
modifications.

A \emph{topological rooted phylogenetic network}, or ``rooted network'' for short,
on taxon set $X$ is a rooted directed acyclic graph where the leaves
(vertices with out-degree $0$) all have in-degree $1$, and are bijectively
labelled by elements in $X$.  A vertex, also referred to as a node,
is a \emph{tree node} if its in-degree is $0$ or~$1$, and a
\emph{hybrid node} otherwise.  For simplicity, we assume here that
$X = [n]$ where $n$ is the number of taxa in the network.

A rooted network may be endowed with a \emph{metric}, where each edge $e$ is
assigned a length $\ell(e) \geq 0$ and an inheritance parameter
$\gamma(e) \in (0, 1]$, with the restrictions that $\ell(e) > 0$ and
$\gamma(e) = 1$ for tree edges, and $\gamma(e) \in (0, 1)$ for hybrid edges.
Moreover, we require that $\gamma(e)$ sums to $1$ over edges with a common child node.

A node $u$ is \emph{above} node $v$ if there is a direct path from $u$ to $v$.
The \emph{least stable ancestor} (LSA) of a taxon set $Y \subseteq X$ is the lowest
node which all directed paths from the root to nodes labelled by $Y$ must pass.

The \emph{semidirected network} induced from a rooted network is the
semidirected graph obtained by removing all edges and nodes above the LSA of
$X$, suppressing the LSA if is of degree $2$, and undirecting all tree edges
while keeping the direction of hybrid edges.  If there is a metric structure of
the rooted network, it is retained in the induced semidirected network.  When we
refer to a semidirected network, we assume it is induced from some rooted
network.

A rooted or semidirected network $N$ \emph{displays} a tree $T$ (and $T$ is
called a \emph{displayed tree}) by the following process: For every hybrid node,
one parent hybrid edge is kept and the rest deleted, which results in a spanning
tree $\tilde{T}$ of $N$.
Then, we keep only the edges in $\tilde{T}$ the are on paths
between two labelled leaves of $N$
(that is, we iteratively prune any unlabelled leaf), yielding $T$.
Note that the pruning makes the definition of displayed trees different
from that in \citet{2023XuAne_identifiability}, but similar
to the definition in \citet{ane24_anomal_networ_under_multis_coales}.

A network $N$ is of \emph{level 1} if any two distinct cycles
in the undirected graph induced by $N$ do not share any node.

If $N$ has a metric, then $\tilde{T}$ and $T$ inherit the
edge lengths from $N$, and we assign a probability distribution
over displayed trees. Namely, the probability of retaining
a parent edge $e$ in $\tilde{T}$ for a
fixed hybrid node is $\gamma(e)$, with the choice of parent edge independent
across hybrid nodes.
We denote the probability of displaying a tree $T$ by
$\gamma(T)$, and we have $\gamma(T) = \prod_{e \in T} \gamma(e)$.
We denote the set of displayed trees of a semidirected network by
$\mathcal{T}(N)$, or simply $\mathcal{T}$ if no confusion is likely.

\subsection{Notations for calibration}

For reasons given in \Cref{sec:identifiability}, there may be a subset of edges
$E_0(N)$ in $N$ whose lengths are assumed to be $0$.
Then, the subset $E_c(N)$ is defined as the set of edges of $N$ \emph{not}
in $E_0(N)$, also called the \emph{calibrating set}, or set of
\emph{calibrating edges}.
We may omit $N$ in the notation if confusion is unlikely.
In $N$, we enumerate the calibrating edges in $E_c$ as $e_1, \ldots, e_m$, and denote the
\emph{edge length vector} as $b = (\ell(e_1), \ldots, \ell(e_m))$.

For vectors $u, v$ of matching size,
we write $v \succeq u$ if $v_i \geq u_i$
for all $i$, and $v \succ u$ if $v_i > u_i$ for all $i$.
When estimating edge lengths $b$, we may impose the constraint
$b \succeq 0$.

For a metric semidirected network $N$, a calibrating edge set $E_c$,
and for a displayed tree $T$ of $N$,
we write $d_{ij}(T, \hb)$ for the path distance between taxa $i, j$ on $T$
when the lengths of edges in $E_c$ are set to vector $\hb$
(and other edge lengths set to $0$).
For simplicity, we write $d_{ij}(T)$ for
the true path distance $d_{ij}(T, b)$ on $T$.

For a set of pairwise distances $d_{ij}$ over $i, j \in [n]$,
we denote $d$ the vectorization of the distance matrix
obtained by stacking up its columns:
$d = (d_{12}, d_{13}, \ldots, d_{23}, \ldots, d_{(n-1)n}) \in \RR^{\binom{n}{2}}$.
For $T \in \mathcal{T}$ we then have
\[
  d(T, \hb) = A(T)\hb,
\]
where $A(T)$ is the \emph{inheritance matrix} of $T$
with respect to $E_c$ (see Additional file 1 of \citet{binet16}, where
it is called the ``topological matrix'').
It is a binary matrix with
$A_{pq}(T) = 1$ if edge $e_q \in E_c$ is on the path between
the taxon pair corresponding to the $p$-th entry of $d(T, \hb)$,
and $A_{pq} = 0$ otherwise.
We also define the \emph{average distance vector} as
$\bar{d} = \sum_{T \in \mathcal{T}} \gamma(T) d(T)$, and the \emph{average distance
vector given edge lengths} $\hb$ as
\[
  \bar{d}(\hb) = \sum_{T \in \mathcal{T}} \gamma(T) d(T, \hb)\;.
\]
Then $\bar{d}(\hb) = \bar{A}\hb$ and $\bar{d} = \bar{A} b$,
where $\bar{A}$ is the \emph{average inheritance matrix} of $N$:
\[
  \bar{A} = \bar{A}(N) =  \sum_{T \in \mathcal{T}} \gamma(T) A(T)\;.
\]

If the network is an unrooted tree $T$,
then $\bar{A}(T)=A(T)$ contains topological information only.
If $N$ contains hybrid edges,
then $\bar{A}(N)$ encodes information on both the topology and
inheritance probabilities of $N$, but not edge lengths.

\subsection{The unconstrained network calibration problem}

We consider a fixed semidirected network topology $N$,
assumed to be known without error, with known hybridization parameters.
Traditionally, the \emph{calibration problem} consists in estimating
the length $\ell(e)$ of all edges $e$ in $N$, such that $\ell(e)$ is in
unit of calendar time (e.g. millions of years).
We extend this problem to address challenges that arise when $N$
is not a tree. Many network estimation methods output phylogenies that lack
lengths for some or all edges, or have edge lengths in coalescent or drift units,
that correlate poorly with substitutions per site or calendar time
(due to variation in population size or generation time, for example).
In this section we consider the \emph{unconstrained}
calibration problem, where we seek to estimate
lengths in substitutions per site, for edges in the species network.
This unconstrained problem makes no molecular clock assumption.
Later in \Cref{sec:constrained} we briefly consider the
\emph{constrained} calibration problem, where we seek to estimate
edge lengths proportional to time, or equivalently, relative node ages,
and for which we will assume a molecular clock.

In this work, we consider solving calibration problems using data
from pairwise distances in substitutions per site, across multiple genes.
To relate the species phylogeny $N$ with the
genetic pairwise distance $\delta_{ij}^{(k)}$ between taxa $i$ and $j$
observed at locus $k$, we assume
the following \emph{scaled displayed tree model}, which accounts for
variation in substitution rate across genes
and for pairwise distance estimation error.

\begin{model}[Scaled displayed tree model]\label{mod:sdtm}
  Let $N$ be a metric semidirected network on $n$ taxa,
  with set of displayed trees
  $\mathcal{T} = \{T_l: l = 1, \ldots, L\}$ and
  probability $\gamma(T)$ for any displayed tree $T$.
  Then the genetic distance $\delta_{ij}^{(k)}$
  between taxa $i, j \in X$ in gene $G_k$ with $k = 1, \ldots, K$ is given by
  \begin{equation}\label{eq:scaleddisplayetreemodel}
  \delta_{ij}^{(k)} = r_k(d_{ij}(T^{(k)}) + \epsilon_{ijk})
  \end{equation}
  where
  \begin{itemize}
  \item $r_k > 0$ is the rate of gene $G_k$,
    sampled from some distribution of rates across genes;
  \item $T^{(k)}$ is the gene tree associated with gene $G_k$, and
    is a randomly sampled displayed tree,
    with probability $\gamma(T)$ of sampling $T \in \mathcal{T}$;
  \item $\epsilon_{ijk}$ is residual variation, 
    sampled from some distribution with mean $0$.
  \end{itemize}
  The rates $r_k$, gene trees $T^{(k)}$ and
  residuals $\epsilon_{ijk}$ ($k = 1,\ldots,K$) are unobserved,
  and assumed to be sampled independently.
\end{model}

\begin{figure}
  \centering \includegraphics[scale=1]{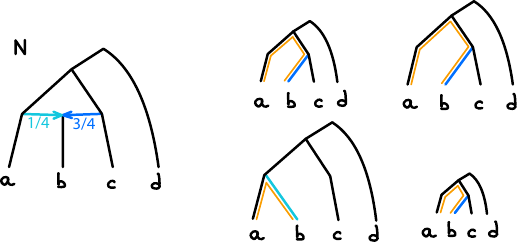}
  \caption{Illustration of the scaled displayed tree model.
  Left: network $N$ on 4 taxa, with hybrid edges annotated by their
  $\gamma$ inheritance values.
  Right: 4 gene trees, randomly sampled from the 2 trees displayed in $N$,
  scaled by their individual rates.
  The observed genetic distances, e.g.\ between $a$ and $b$ (shown in orange),
  are modeled as the path distances on these gene trees with additive error.
  In this example, a short genetic distance between $a$ and $b$ could be the
  result of $a$ and $b$ being sister for that gene due to introgression
  (bottom left gene tree), or a slow rate for the gene (bottom right),
  or residual variation (not illustrated).
  }
  \label{fig:model}
\end{figure}

\Cref{fig:model} illustrates this model: on the left is the network $N$, and on
the right we have a sample of 4 gene trees, scaled by their evolutionary rates.
After adding residual variation, these trees yield the observed genetic distances.
This model is similar to the universal pacemaker model proposed in
\citet{2012Snir-pacemaker}, but differs in that the error term is additive
instead of multiplicative.

Note that we can multiply the edge lengths in $N$ by a constant $c > 0$ and the
rates $r_k$ by $c^{-1}$ without changing the model.
External information about the rates $r_k$ or distances on $N$
may help eliminate this scaling non-identifiability.
For example, knowledge of the genomewide substitution rate per site per year could
provide an informative prior distribution on the rates $r_k$.
Alternatively, knowledge about node ages can constrain the scale of edges lengths
in $N$, such as from fossil data or
from serial sampling with tips of variable and known ages.
But in the absence of such external information and
from distance data,
edge lengths are only identifiable up to a multiplicative factor.
Given this, we interpret the edge length $\ell(e)$ as the
\emph{baseline} substitution rate per site that occurred along $e$ and $r_k$ the
\emph{relative} rate of gene $G_k$, relative to this baseline.
If we also obtain estimates of $r_k$ and
if $N_k$ is the number of sites in gene $k$, then
$$\tilde{\ell}(e) = \frac{\sum_k r_kN_k}{\sum_k N_k} \ell(e)$$
is the genomewide substitution rate per
site that occurred along edge $e$, and is not a relative quantity.

\medskip

When a subset of edges $E_0$ have their lengths assumed to be $0$,
the edge length vector $b$ lists the lengths of the remaining edges
in the calibrating set $E_c$.
The calibration problem then seeks a candidate edge length vector $\hb$
under which pairwise distances predicted by the model
best fit some set of observed distances $\delta_{ij}^{(k)}$.

The exact notion of ``best fit'' would depend on
the distributional assumptions on the
rate and error terms in the model.
In this work, we focus on finding fast procedures that are consistent.

In the following sections we also write
$\alpha = (r_1^{-1}, \ldots, r_K^{-1})$ for the vector of inverse rates,
and $\ha$ for the vector of candidate inverse rates.

Since $b \succeq 0$ and $\alpha \succ 0$, it is natural to add the
corresponding constraints $\hb \succeq 0$ and $\ha \succ 0$ to optimization
objectives for the calibration problem.  This is desirable because it ensures
the resulting lengths and scales are interpretable.  However, in the following
section we state our results without these constraints for simplicity, as a
consistent procedure without such constraints
remains consistent with the constraints in place.  In addition, the proposed
criteria \eqref{eq:qdti} and \eqref{eq:qdti-cons} are quadratic programming
problems \citep{boyd2004} with and without these constraints, and are readily
solved by standard methods.

\section{Tree calibration: \texttt{ERaBLE} is consistent}

Assuming that the phylogeny is a tree $T$, one approach to solve the
calibration problem is \texttt{ERaBLE} \citep{binet16}.
The scaled displayed tree model simplifies to
$\delta^{(k)}_{ij} = r_k (d_{ij}(T) + \epsilon_{ijk})$
where $T^{(k)}=T$ is shared across genes.
This is an explicit form
of the ``proportional model'' in \citet{binet16}.
On a tree, all edges lengths can be estimated from
pairwise distances, that is, the calibrating set $E_c$
contains all edges of $T$.

\texttt{ERaBLE} fits the branch lengths $\hat{b}$ in $T$, along with the inverse
rates $\hat{\alpha}$, by minimizing the following criterion:
\begin{equation}
  \label{eq:erable}
  Q(\hat{\alpha}, \hat{b}) = \sum_k \sum_{i<j}
  w^{(k)}_{ij}(\hat{\alpha}_k \delta^{(k)}_{ij} - d_{ij}(T, \hb))^2
\end{equation}
subject to the constraint $\sum_k z_k \hat{\alpha_k} = \sum_k z_k$.  Here
$w^{(k)}_{ij} > 0$ are weights and
$z_k\neq 0$ are constants
that do not depend on $\ha$ or $\hb$.
\citet{binet16} used $w_{ij}^{(k)} = N_k$ and
$z_k = N_k \sum_{i < j} \delta_{ij}^{(k)}$, where $N_k$ is the sequence length
of gene $k$.  But other choices are possible, as noted in \citet{binet16}.

We show here that \texttt{ERaBLE} is consistent, in the sense that when the error in the
input distances tends to $0$, the solutions to \eqref{eq:erable} converge to the
true parameter, up to a multiplicative factor.  (We are unaware of any prior
proof.)
The assumption of input distance error going to $0$
can be interpreted as requiring the sequence length of each gene to tend to
infinity (given a consistent estimator of the gene-specific distance from
sequences), while the number of genes stays fixed.

\begin{theorem}
  \label{thm:erable}
  Let observed distances come from the scaled displayed tree
  model, under a network $N=T$ that is a tree.
  Given a fixed number of genes $K$, assume that
  for each taxon pair $i,j$ and for each gene $k$,
  $\epsilon_{ijk} \to 0$ almost surely.
  Then any solution $(\as, \bs)$ of~\eqref{eq:erable}
  converges to $(c\alpha, cb)$ almost surely, where $\alpha = (r_1^{-1}, \ldots, r_K^{-1})$,
  and $c = \dfrac{\sum_k z_k}{\sum_k\frac{z_k}{r_k}}$ is the weighted
  harmonic mean of the unknown true gene rates, with weights given by $z$.
\end{theorem}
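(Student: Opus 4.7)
The plan is to recast the criterion \eqref{eq:erable} as a homogeneous constrained quadratic program in the stacked parameter $\theta = (\ha, \hb)$, identify the unique minimizer of the noise-free limit objective, and then conclude by continuity of the solution to a quadratic program in its data.

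First, I would substitute the model $\delta^{(k)}_{ij} = r_k(d_{ij}(T) + \epsilon_{ijk})$ into \eqref{eq:erable}. Each residual $\ha_k\delta^{(k)}_{ij} - d_{ij}(T,\hb)$ is linear in $\theta$, so stacking over taxon pairs $(i,j)$ and genes $k$ yields $Q(\theta) = \theta^\top M^\top W M \theta$ for a matrix $M$ depending on the data $\delta^{(k)}_{ij}$ and on the inheritance matrix $A = A(T)$, with $W$ the positive diagonal of weights $w^{(k)}_{ij}$. As $\epsilon_{ijk} \to 0$ a.s., $M \to M_\infty$ entrywise, where $M_\infty$ is the analogous matrix built from the noiseless distances $r_k d_{ij}(T)$. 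The linear equality $\sum_k z_k \ha_k = \sum_k z_k$ reads $C\theta = \sum_k z_k$ with $C = (z^\top,\,0)$ acting on the $\ha$ block only; note that this hyperplane does not pass through the origin.

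Next, I analyze the limit problem $\min_{C\theta = \sum_k z_k}\, \theta^\top M_\infty^\top W M_\infty \theta$. Since $W$ is positive, the minimum equals $0$ iff the constraint hyperplane meets $\ker M_\infty$. But $M_\infty\theta = 0$ is equivalent to $A\hb = \ha_k r_k\, A b$ for every $k$, and the full column rank of $A(T)$ on a tree (the classical recovery of a tree metric from tip-to-tip distances) forces $\ha_k r_k$ to be a single constant $c$ across $k$ and $\hb = cb$. Hence $\ker M_\infty = \mathrm{span}\{(\alpha, b)\}$, and imposing $C\theta = \sum_k z_k$ on this line selects $c = \sum_k z_k / \sum_k z_k/r_k$, so the limit minimizer is the unique point $(c\alpha, cb)$.

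To transfer this to nonzero noise, I verify that $H_\infty = M_\infty^\top W M_\infty$ is positive definite on $\ker C$: if $H_\infty\theta = 0$ and $C\theta = 0$, then $\theta = t(\alpha,b)$ and $t\,z^\top\alpha = 0$, forcing $t=0$. Positive definiteness on $\ker C$ is an open condition on the Hessian, so a.s.\ for all sufficiently small $\epsilon_{ijk}$ the Hessian $H = M^\top W M$ also enjoys it. The constrained quadratic program then admits a unique minimizer determined by a nonsingular KKT system depending continuously on $M$, so any minimizer $(\as, \bs)$ converges a.s.\ to $(c\alpha, cb)$. The step most likely to require care is this last one: because $H_\infty$ itself is singular, continuity of the minimizer cannot be argued over the full parameter space and must instead be extracted from positive definiteness on the constraint kernel $\ker C$; the full column rank of $A(T)$ is routine bookkeeping in this paper's notation but is essential to the uniqueness assertion in the limit.
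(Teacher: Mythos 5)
Your proposal is correct, and its first half --- identifying the noise-free minimizer --- is essentially the paper's argument in different clothing: you show $\ker M_\infty = \mathrm{span}\{(\alpha,b)\}$ via $d(T)\neq 0$ (forcing $\ha_k r_k$ constant) and full column rank of $A(T)$ (forcing $\hb = cb$), then intersect with the affine constraint to pin down $c = \sum_k z_k/\sum_k z_k/r_k$; the paper does exactly this, phrased as ``$Q=0$ iff \dots''. Where you genuinely diverge is the transfer to nonzero noise. The paper proves uniform convergence of $Q(\cdot\,;\epsilon)$ to $Q(\cdot\,;0)$ on compact sets (elementary bounds plus Cauchy--Schwarz, deferred to the appendix) and then invokes Theorem~2 of Kall (1986) to conclude that any sequence of minimizers converges to the unique limit minimizer. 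You instead exploit the quadratic structure directly: positive definiteness of $H_\infty = M_\infty^\top W M_\infty$ on $\ker C$ (which holds because $\ker H_\infty \cap \ker C = \{0\}$, using $z^\top\alpha \neq 0$) is an open condition, so for small noise the equality-constrained QP has a nonsingular KKT system whose solution depends continuously on $M$. Your route is more self-contained (no external epi-convergence result) and gives as a bonus that the minimizer is \emph{unique} for all sufficiently small noise, whereas the paper must explicitly accommodate possibly non-unique minimizers. What the paper's route buys is reusability: the same uniform-convergence argument is recycled verbatim for $\Qdti$ and, more importantly, survives the addition of the inequality constraints $\hb \succeq 0$, $\ha \succ 0$ (and the constrained calibration problem with $C_1\htt \succeq 0$) without modification, while your KKT linear system would need to be replaced by an active-set or perturbation analysis once inequality constraints can bind. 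Two small points to make explicit if you write this up: the step ``$\ha_k r_k$ is a single constant'' uses $d(T)\neq 0$ (positive tree edge lengths), not the full column rank of $A(T)$, which is only needed afterward for $\hb = cb$; and both your argument and the paper's implicitly assume $\sum_k z_k/r_k \neq 0$, without which $c$ is undefined.
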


\begin{proof}
  We first show that when $\epsilon_{ijk} = 0$, the unique solution
  is $(c\alpha, cb)$.
  Note that 
  Since $Q \geq 0$, any choice of $\hat{\alpha}, \hb$ that makes $Q = 0$ is
  a minimizer of~\eqref{eq:erable}.
  $Q = 0$ if and only if
  $\hat{\alpha}_k \delta^{(k)}_{ij} = d_{ij}(T, \hb)$ for all $k, i, j$,
  that is,
  $\hat{\alpha}_kr_k = \frac{d_{ij}(T, \hb)}{d_{ij}(T)}$
  given the assumption that $\delta_{ij}^{(k)} = r_k d_{ij}$
  with no estimation error.
  This is equivalent to there being a constant $c$ such that
  $\hat{\alpha}_k r_k = c$ for all $k$, and
  $\frac{d_{ij}(T, \hb)}{d_{ij}(T)} = c$ for all $i,j$.
  Then we must have that $\hat{\alpha}_k = c r_k^{-1}$ with
  $c = \frac{\sum_k z_k}{\sum_k\frac{z_k}{r_k}}$ to satisfy
  the linear constraint
  $\sum_k z_k\hat{\alpha}_k = \sum_k z_k$.
  Then choosing $\hb = cb$ gives
  $d_{ij}(T, \hb) = c d_{ij}(T)$ for all $i,j$, proving the existence
  of $\hat{\alpha}, \hb$ such that $Q(\hat{\alpha}, \hb)=0$.
  Since the path distances on a phylogenetic tree
  determine the branch lengths uniquely,
  $d_{ij}(T, \hb) = c d_{ij}(T)$ implies that $\hb = cb$,
  thereby proving the uniqueness of $(\alpha^*,b^*)$ minimizing $Q$.

  Now we consider the scenario when $\epsilon_{ijk} \to 0$ almost surely.
  To emphasize the dependence of $Q$ on the data
  via the vector of all residual terms $\epsilon={(\epsilon_{ijk})}_{i,j,k}$,
  we write $Q(\ha, \hb)$ here as
  $Q(\ha, \hb \,; \epsilon)$.
  Since $Q(\ha, \hb\,; \epsilon)$ is convex quadratic in $(\ha, \hb)$
  for any $\epsilon$, there always exist some minimizer, although not
  necessarily a unique one.  We shall show that any sequence of minimizers (even
  when some of the minimizer is not unique) converges to $(c\alpha, cb)$.  By
  Theorem~2 in \citet{kall86}, it suffices to show that
  $Q(\ha, \hb \,; \epsilon)$ converges uniformly to $Q(\ha, \hb \,; 0)$
  on any compact subset of the domain as $\epsilon \to 0$.  This can be easily shown using elementary
  bounds, and we provide details in \Cref{sec:techproofs}.
\end{proof}

\section{Unconstrained network calibration}

\subsection{Identifiability}
\label{sec:identifiability}

The unconstrained problem is formulated with semidirected networks,
in which tree edges are undirected and the root position is
not specified, because the root position is not
identifiable from pairwise distances alone.
As noted in
\citet{2023XuAne_identifiability}, the paths in displayed trees of a rooted
network, or equivalently up-down paths in the network, are determined by the
semidirected structure of the network.  Since the pairwise distances derive from
such paths, one cannot identify beyond the semidirected structure with such
data.
This however does not preclude one from calibrating a rooted topology
as done later in \Cref{sec:constrained}, in which the root position is
obtained from external information (such as a priori knowledge of
outgroup species) or from other types of data.

On a given semidirected network, it may not be possible to identify all the edge
lengths using pairwise distances.
For example, if a hybrid node has a unique child edge, one may add
$\epsilon$ to the length of the child edge and $-\epsilon$ to the lengths of all
its parent hybrid edges, while leaving unchanged the gene-specific pairwise
distances between taxa on each displayed trees.  This is referred to as
``unzipping'' in \citet{Pardi_2015}, and is also mentioned in \citet{willson13}.
One may get around this identifiability issue by requiring the hybrid's child
edge to have length $0$.  In that case, we would define $E_0$ as the set of
single-child edges of hybrid nodes, and the calibration set $E_c$ as containing
all other edges.

The lack of edge length identifiability is more extensive
when the data consist of average distances.  At a given hybrid node,
one may set to $0$ the length of all its parent hybrid edges and lengthen
its child edge, while leaving all average pairwise distances unchanged.
This is referred to as
``zipping-up'' a reticulation \citep{2023XuAne_identifiability}.
Later, we will consider a criterion that calibrates edge lengths by
fitting average distances. For this criterion, we will require all hybrid edges
to have length $0$, that is, we will restrict the calibrating set $E_c$ to
consist of tree edges only.

\subsection{From naive to consistent extensions}

In this section we consider a few naive but natural approaches that have various
drawbacks.  They will motivate the approach proposed in the next section.

The network calibration problem is challenging due to the variability of
the displayed trees $T^{(k)}$.
If we knew the tree topology of $T^{(k)}$, which the distance data $\delta^{(k)}$
is coming from, then we could optimize an aggregate of \texttt{ERaBLE} objectives, e.g.\
minimize
$\sum_k \| \ha_k \delta^{(k)} - d(T^{(k)}, \hb) \|^2$
over $(\ha, \hb)$
with some linear constraint on $\ha$
(and choosing $w_{ij}^{(k)}=1$ for simplicity).

Without knowledge of the gene tree topologies,
one could proceed with finding, for each $k$, the displayed tree that
best fits each distance vector from gene $k$,
e.g.\ using the following criterion instead:
$$\sum_{k=1}^K \min_{T \in \mathcal{T}}\| \ha_k \delta^{(k)} - d(T, \hb) \|^2.$$
However, each evaluation of this criterion involves searching over all displayed
trees for each gene, making it expensive to optimize,
as the number of displayed tree in $N$ grows exponentially
with the number of reticulations.

A naive approach that maintains the computational tractability
of the objective function is to replace the minimum over displayed trees
by an expectation, that is
\[
  Q_1(\ha, \hb) = \sum_{k=1}^K \sum_{T \in \mathcal{T}}
   \gamma(T) \; \| \ha_k \delta^{(k)} - d(T, \hb) \|^2.
\]
But this turns out to give a biased estimate.
Consider the 5-sunlet in \Cref{fig:q1-example}, with
$\gamma(e_1)=\gamma(e_2)=0.5$, and with all tree edges
having length $1$ and both hybrid edges having lengths $0$.
Let $E_c$ be the set of tree edges.
Consider solving the unconstrained calibration problem using
2 genes, with gene 1 evolving under the displayed tree $T^{(1)}$ that keeps
hybrid edge $e_1$, and gene 2 evolving under the displayed tree $T^{(2)}$
with edge $e_2$. This simple choice implies the best case scenario when the
proportion of genes with tree $T^{(i)}$ matches $\gamma(e_i)=0.5$.
Suppose we observe the distance vector from each gene tree with no error,
and that genes have inverse rates $r_1^{-1} = 0.5$ and $r_2^{-1} = 1.5$.
Their average is 1, that is: $z^\top r^{-1} = 1$ where we choose
$z = (\frac{1}{2}, \frac{1}{2})$.
Optimizing $Q_1$ with the constraint $z^\top \ha = 1$
yields the estimates $b^*$ shown in \Cref{fig:q1-example}:
the pendant edges are severely overestimated and the
cycle edges are severely underestimated.

\begin{figure}
  \centering \includegraphics[scale=2]{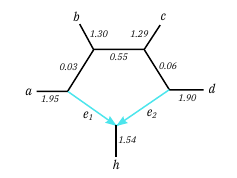}
  \caption{Network with a 5-sunlet topology. Its tree edges are
    annotated with lengths estimated by minimizing $Q_1$,
  when the true edge length are $1$ for tree edges and $0$ for hybrid edges,
  $\gamma(e_1)=\gamma(e_2)=0.5$, and given distance data from
  each displayed tree, with inverse rates $0.5$ and $1.5$.
  The calibration edges, in black, are the tree edges.
  The hybrid edges, in blue, are fixed to $0$.
  On this topology, $Q_1$ overestimates pendant edge lengths
  and underestimates cycle edge lengths.}
  \label{fig:q1-example}
\end{figure}

The reason for such bias becomes evident when we rewrite the criterion:
\begin{equation*}
\begin{split}
  Q_1(\ha, \hb)
  &= K \norm{\bar{\delta}_{\ha} - \bar{d}(\hb)}^2+
    \sum_k\norm{\ha_k \delta^{(k)} - \bar{\delta}_{\ha}}^2 +
    K \sum_T \gamma(T) \norm{\bar{d}(\hb) - d(T, \hb)}^2,
\end{split}
\end{equation*}
where $\bar{\delta}_{\ha} = \frac{1}{K} \sum_k \ha_k \delta^{(k)}$.

The first term of $Q_1$ is fitting the average distance on $N$ to the average
distance of the scaled observed distances.
The second term does not depend on branch lengths $\hb$,
and acts like a penalty that favors $\ha$ such that
$\ha_k \delta^{(k)}$ has low variance across genes.
More importantly, the third term favors $\hb$ such that $d(T, \hb)$ has low
variance across displayed trees $T \in \mathcal{T}$.
Intuitively, this last term is inadequate for a correct estimation
of edge lengths, and causes an estimation bias in $\hb$.

In light of this, we consider using the first term only, and fit the average
distances directly.  Specifically, we consider the following criterion:

\begin{equation}\label{eqn:qavg}
  \Qavg(\ha, \hb) = \norm{\frac{1}{K} \sum_k \ha_k \delta^{(k)} - \bar{d}(T, \hb)}^2
\end{equation}
subject to $z^\top \ha = 1$, where $z \succ 0$.
For this criterion, as discussed in \Cref{sec:identifiability}, we restrict
$E_c$ to consist of tree edges only, in order to retain identifiability.

While $\Qavg$ correctly recovers the edge lengths for the example
network topology in
\Cref{fig:q1-example}, it breaks down for other cases.
Consider the 4-sunlet
in \Cref{fig:qavg-example}, with $\gamma(e_1) = \gamma(e_2) = 0.5$ and with
tree edges of length $1$ and hybrid edges of length $0$.  Let $E_c$ be the set
of tree edges.
Consider the unconstrained calibration problem with 2 genes, with gene $G_1$
(resp.\ $G_2$)
evolving under displayed tree $T^{(1)}$ where $e_1$ (resp.\ $T^{(2)}$
and $e_2$) is kept, which matches $\gamma(e_1) = \gamma(e_2) = 0.5$.
Suppose we observe
the distance vector from each gene with no error and with inverse rates
$r_1^{-1} = 0.5$ and $r_2^{-1} = 1.5$, and minimize $\Qavg$
subject to $z^\top r^{-1} = 1$ where
$z = (\frac{1}{2}, \frac{1}{2})$.
In this case $\Qavg$ has multiple minimizers.
The true parameters
$\alpha = (\frac{1}{2}, \frac{3}{2})$ and $b = (1, \ldots, 1)$
do minimize $\Qavg$ with
$\Qavg(\alpha, b) = 0$
and satisfy $z^\top r^{-1} = 1$.
But so does the choice $\ha = (1, 1)$ and $\hb$ corresponding
to the edge lengths shown in \Cref{fig:qavg-example},
also with perfect fit
$\Qavg(\ha, \hb) = 0$.

\begin{figure}
  \centering \includegraphics[scale=2]{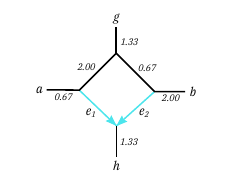}
  \caption{Network with a $4$-sunlet topology. Tree edges are annotated with
    alternative edge lengths $\hb$ that satisfy $\Qavg(\ha, \hb) = 0$ with
    $\ha = (1, 1)$, when the true edge lengths are $1$ for tree edges and $0$
    for hybrid ones, and $\gamma(e_1) = \gamma(e_2) = 0.5$.  The distance data are
    from each displayed tree with inverse rates $0.5$ and $1.5$.
    The calibration edges, in black, are the tree edges.
    The hybrid edges, in blue, are fixed to $0$.
    In this case there is no unique solution to $\Qavg$, and minimizing
    $\Qavg$ does not necessarily recover the true edge lengths.
}

\label{fig:qavg-example}
\end{figure}

Even if we exclude 4-sunlets, it is unclear if $\Qavg$ would be consistent for
simple classes of networks such as the class of level-1 networks
without 4-cycles.  In \Cref{sec:qavg} we characterize when $\Qavg$ can be consistent and
discuss the implications.

\subsection{A composite and consistent criterion}
\label{sec:dti}

We again consider an ideal scenario with 
perfect knowledge of displayed tree frequencies and consistent
estimation of gene-specific distances from infinitely long sequences.
This situation provides a minimal requirement that a
``good'' criterion should pass, excluding $Q_1$ and $\Qavg$ as seen earlier.

Luckily, there is more structure in the network calibration problem that we can
incorporate into $\Qavg$.  For example, if there is a tree path between taxa $i$
and $j$, then we know that in the absence of error $\alpha_k \delta^{(k)}_{ij}$
should be independent of $k$, for some choice of $\alpha$.
In other words, the distances scaled by $\alpha$ should agree
across genes when there is a tree path between two taxa.

We can further generalize this by considering all linear combinations of
scaled pairwise distances that should agree on all displayed trees, which
leads to the following definition.

\begin{definition}
  Let $N$ be a semidirected network and a calibrating set $E_c$.
  A vector $v \in \RR^{\binom{n}{2}}$ is \emph{displayed tree invariant} if
  $v^\top d(T, \cdot)$ is independent of $T \in \mathcal{T}$.  The linear
  subspace of all such vectors is called the \emph{displayed tree invariant
    subspace}, and we denote it by $\Vdti(N)$, or simply $\Vdti$ when no
  confusion is likely.
  We denote the orthogonal projection onto $\Vdti(N)$ by $\Pidti(N)$ or $\Pidti$.
\end{definition}

The projection $\Pidti$ is interesting because
the projection of distances 
from different displayed trees are identical,
given any fixed candidate edge lengths $\hb$ on $N$.
Indeed, let $B$ be a matrix whose columns form an orthonormal basis $\Vdti$.
Then
$B^\top d(T, \cdot)$ is independent of $T$.  Consequently, for any
$T, T' \in \mathcal{T}$ and $\hb$ we have
$\Pidti d(T, \hb) = BB^\top d(T, \hb) = BB^\top d(T', \hb) = \Pidti d(T', \hb)$,
where $\Pidti$ is the orthogonal projection onto $\Vdti$.

\begin{proposition}
  Let $N$ be a semidirected network with calibrating set $E_c$.
  Then
  $\Vdti = \cap_{T \in \mathcal{T}}\ker(A(T)^\top - A(T_0)^\top)$, where $T_0 \in
  \mathcal{T}$ is any fixed displayed tree and
  $A(T)$ is the topological matrix of $T$ with respect to $E_c$.
\end{proposition}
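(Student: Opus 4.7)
The plan is to unfold the definition of $\Vdti$ and reduce the displayed-tree invariance condition to a system of linear equations in $v$, one per displayed tree.

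First, I would observe that by the relation $d(T,\hb) = A(T)\hb$, the condition defining membership in $\Vdti$, namely that $v^\top d(T,\cdot)$ is independent of $T$, can be rewritten as
\[
v^\top A(T)\hb = v^\top A(T')\hb \quad \text{for all } T, T' \in \mathcal{T} \text{ and all } \hb.
\]
Since this must hold for every candidate edge length vector $\hb$ (and a linear functional on $\RR^m$ that vanishes on all of $\RR^m$ is identically zero), this is equivalent to the equality of row vectors $v^\top A(T) = v^\top A(T')$ for every $T, T' \in \mathcal{T}$.

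Next, I would fix an arbitrary reference tree $T_0 \in \mathcal{T}$ and note that the above condition holds for every pair $T, T'$ if and only if it holds for every $T$ against $T_0$, since equality of $v^\top A(T)$ with a common object $v^\top A(T_0)$ is transitive. Rewriting $v^\top A(T) = v^\top A(T_0)$ as $(A(T)^\top - A(T_0)^\top)v = 0$, I obtain $v \in \ker(A(T)^\top - A(T_0)^\top)$ for each $T \in \mathcal{T}$. Intersecting over $T$ gives the claimed description of $\Vdti$.

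There is no real obstacle here; the proof is a straightforward chain of equivalences. The only mild subtlety worth a sentence is the quantifier swap justifying passing from equality of the scalars $v^\top A(T)\hb$ for all $\hb$ to equality of the vectors $v^\top A(T)$, which follows from the fact that $\hb$ ranges over a set containing a basis of $\RR^m$ (and in particular, if one restricts to $\hb \succeq 0$ as in the calibration problem, the nonnegative orthant still contains a basis of $\RR^m$, so the conclusion is unaffected).
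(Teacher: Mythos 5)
Your proof is correct and follows essentially the same route as the paper's: unfold the definition via $d(T,\hb)=A(T)\hb$, reduce displayed-tree invariance to $v^\top A(T)=v^\top A(T_0)$ for all $T$, and restate this as the intersection of kernels. The only difference is that you spell out the quantifier step (from equality of the scalars for all $\hb$ to equality of the row vectors), which the paper leaves implicit; this is a reasonable addition but not a different argument.
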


\begin{proof}
  Recall that $d(T, \hb) = A(T) \hb$.  It follows
  that a vector $v$ is displayed tree invariant if and only if
  $v^\top A(T_0) = v^\top A(T)$ for all $T \in \mathcal{T}$, or equivalently
  $v \in \cap_{T \in \mathcal{T}}\ker(A(T)^\top - A(T_0)^\top)$.
\end{proof}

We now formulate a criterion incorporating displayed tree invariant subspace.
\begin{equation}\label{eq:qdti}
  \Qdti(\ha, \hb) = \norm{(I - \Pidti) \left(\frac{1}{K} \sum_k \ha_k
      \delta^{(k)} - \bar{A} \hb\right)}^2 + \lambda \sum_k \norm{\Pidti(\ha_k
    \delta^{(k)} - \bar{A}\hb)}^2\;.
\end{equation}
We will seek $(\ha, \hb)$ that minimize $\Qdti$
subject to $z^\top \ha = 1$, where $z \succ 0$ and $\lambda > 0$
are pre-specified.
In $\Qdti$, the distance data are decomposed into one component
in $\Vdti$, and one component in the orthogonal complement
$\Vdti^\perp$.
The first term in $\Qdti$ fits the average distance on $\Vdti^\perp$,
and the second term fits the gene-specific distances individually on $\Vdti$.

When $N$ is a tree, then $\Pidti = I$
and $\Qdti$ reduces to \texttt{ERaBLE}.
At the other extreme, if $N$ is sufficiently complex
with $\Vdti=\{0\}$ and $\Pidti = 0$, then $\Qdti$ reduces to $\Qavg$.

We have the following consistency for $\Qdti$.

\begin{theorem}\label{thm:qdti}
  Let $N$ be a metric semidirected network and calibrating set $E_c$, with any
  edge not in $E_c$ having length $0$.
  Suppose that $N$ satisfies the following:
  \begin{enumerate}
  \item $\Pidti \bar{d} \neq 0$;
  \item the edge length vector $b$ is identifiable from
    average distances, given the topology of $N$ and
    the constraint that edges not in $E_c$ have length 0.
  \end{enumerate}
  Let observed distances come from the scaled displayed tree
  model, with a fixed number of genes $K$.  Assume that
  for each taxon pair $i,j$ and for each gene $k$,
  $\epsilon_{ijk} \to 0$ almost surely.
  Also, assume that the gene tree frequencies match the displayed tree probabilities,
  that is,
  $\frac{1}{K}\sum_{k=1}^K \indi(T^{(k)} = T) = \gamma(T)$
  for all $T \in \mathcal{T}$.
  Then any minimizer $(\as, \bs)$ of
  $\Qdti(\ha, \hb)$ subject to $z^\top \ha = 1$ converges to $(c\alpha, cb)$
  almost surely,
  where $c = {(z^\top\alpha)}^{-1} > 0$.
\end{theorem}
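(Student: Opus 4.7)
The plan is to mirror the two-step strategy used for \Cref{thm:erable}. First I would analyze the noise-free regime $\epsilon_{ijk}=0$ and show that, under the matched gene tree frequency assumption, the unique minimizer of $\Qdti$ subject to $z^\top \ha = 1$ is exactly $(c\alpha, cb)$. Then I would transfer the conclusion to the stochastic setting by verifying that $\Qdti(\cdot, \cdot\,; \epsilon) \to \Qdti(\cdot, \cdot\,; 0)$ uniformly on compacta and applying the same \citet{kall86} Theorem~2 stability result. Since $\Qdti(\ha,\hb\,;\epsilon)$ is convex quadratic in $(\ha, \hb)$ for every realization of $\epsilon$, minimizers exist at each $\epsilon$, and the constraint $z^\top \ha = 1$ is linear, so the feasible set is closed.

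For the noise-free step, $\delta^{(k)} = r_k d(T^{(k)})$, and the matched frequency assumption gives $\frac{1}{K}\sum_k d(T^{(k)}) = \bar{d}$. At $(\ha,\hb) = (c\alpha, cb)$ we have $\ha_k r_k = 1$, so $\frac{1}{K}\sum_k \ha_k\delta^{(k)} = c\bar{d} = \bar{A}\hb$, both terms of $\Qdti$ vanish, and $z^\top \ha = c\, z^\top\alpha = 1$ fixes $c = (z^\top\alpha)^{-1}$. For uniqueness, suppose $(\as, \bs)$ attains $\Qdti = 0$ subject to the constraint. Using the defining property $\Pidti d(T) = \Pidti \bar{d}$ for every $T \in \mathcal{T}$, the vanishing of the second term forces $\as_k r_k \Pidti \bar{d} = \Pidti \bar{A}\bs$ for every $k$. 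The right-hand side is independent of $k$, and condition~(1) gives $\Pidti \bar{d} \neq 0$, so $\as_k r_k$ must be a common constant $c^*$; the linear constraint then pins $c^* = (z^\top \alpha)^{-1} = c$, yielding $\as = c\alpha$. Substituting back, the first term being zero gives $(I - \Pidti)(c\bar{d} - \bar{A}\bs) = 0$, while the second gives $\Pidti(c\bar{d} - \bar{A}\bs) = 0$, so $\bar{A}\bs = c\bar{d} = \bar{A}(cb)$. Condition~(2), identifiability of $b$ from average distances, then yields $\bs = cb$.

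For the noisy step, I would expand $\Qdti(\ha, \hb\,; \epsilon) - \Qdti(\ha, \hb\,; 0)$ and bound each cross term by a product of a bounded bilinear expression in $(\ha, \hb)$ and a factor vanishing with $\max_{i,j,k}|\epsilon_{ijk}|$; restricting $(\ha, \hb)$ to a compact set then produces the required uniform convergence, exactly as in the proof of \Cref{thm:erable} (with additional details deferred to \Cref{sec:techproofs}). Kall's theorem then guarantees that any selection of minimizers of $\Qdti(\cdot, \cdot\,; \epsilon)$ converges almost surely to the unique minimizer of the limit problem, namely $(c\alpha, cb)$. The main obstacle is the uniqueness argument in the noise-free case: it requires cleanly separating the roles of the two terms, so that the $\Pidti$-term determines $\as$ up to a scalar via condition~(1), while the $(I - \Pidti)$-term combined with condition~(2) determines $\bs$. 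The composite form of $\Qdti$ is precisely what allows these pieces to cooperate; as \Cref{fig:q1-example,fig:qavg-example} illustrate, neither ingredient alone suffices.
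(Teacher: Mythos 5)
Your proposal is correct and follows essentially the same route as the paper's proof: the noise-free case is handled by using the second term and $\Pidti\bar d\neq 0$ to force $\as_k r_k$ constant (hence $\as=c\alpha$ via the linear constraint), then combining both terms with the frequency assumption and the identifiability hypothesis to get $\bar A\bs=c\bar d$ and $\bs=cb$, and the noisy case is transferred by uniform convergence on compacta plus Kall's stability theorem, exactly as in \Cref{thm:erable}. No gaps.
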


\begin{proof}
  We first show that when $\epsilon_{ijk} = 0$, the unique minimizer
  is $(c\alpha, cb)$.
  It is easy to see $\Qdti \geq 0$.  Hence to prove this, we only
  need to show that the solution to $\Qdti(\ha, \hb) = 0$ is
  unique and of the form above.

  Suppose $\Qdti(\hat{\alpha}, \hat{b}) = 0$.
  From the second term of $\Qdti$ we have that
  $\Pidti (\hat{\alpha}_k \delta^{(k)} - \bar{A}\hat{b}) = 0$ for
  $k = 1, \ldots, K$.
  Therefore for all $k$, we have
  $\Pidti \bar{A} \hb = \Pidti \ha_k \delta^{(k)} = \ha_k r_k \Pidti
  d(T^{(k)}) = \ha_k r_k \Pidti \bar{d}$.  The last equality comes from
  the fact that $\Pidti d(T)$ is the same for all $T \in \mathcal{T}$.  Now
  since $\Pidti \bar{d} \neq 0$, we have $\ha_k r_k = \ha_{k'}
  r_{k'}$ for any $k \neq k'$.
  Together with the constraint
  $z^\top \hat{\alpha} = 1$ this yields $\hat{\alpha} = c^{-1}\alpha$, 
  where $c = z^\top \alpha$. 
  Furthermore, $\Pidti (\hat{\alpha}_k \delta^{(k)} - \bar{A}\hat{b}) = 0$
  for each $k$
  implies that $\Pidti (\frac{1}{K} \sum_k \hat{\alpha}_k \delta^{(k)} -
  \bar{A}\hat{b}) = 0$.
  Combined with the first term of $\Qdti$ being $0$, this gives 
  $$\bar{A}\hat{b} = \frac{1}{K} \sum_{k=1}^K \hat{\alpha}_k \delta^{(k)} = \frac{1}{cK} \sum_k
  d_k.$$  

  With the frequency assumption, the observed average
  $\frac{1}{K} \sum_{k=1}^K d_k$ is exactly the
  average $\bar{d}$ expected from $N$,
  and we have $\bar{A} \hat{b} = c^{-1} \bar{d}$.
  With the identifiability assumption, this implies $\hat{b} = c^{-1}b$.

  The case when $\epsilon_{ijk} \to 0$ then follows in a similar way as in the
  proof of \Cref{thm:erable}, which we omit here.
\end{proof}

The frequency assumption
$\frac{1}{K}\sum_{k=1}^K \indi(T^{(k)} = T) = \gamma(T)$ in
\Cref{thm:qdti} may appear restrictive.
Under the scaled displayed tree model, gene trees
are considered to be random, independent, from a
multinomial distribution on displayed trees,
with probabilities $\gamma(T)$ considered fixed and known.
The frequency assumption corresponds to
treating the whole set of sampled gene trees as fixed,
interpreting $\gamma(T)$ as the true \emph{realized} frequency
of displayed trees in the sample, and assuming
and that these frequencies have been estimated without error.
For any sample size, it is trivial to estimate the realized frequencies without
error if we can estimate the tree topology of each sampled gene perfectly,
whereas an infinitely large sample size would be
required to estimate the generative multinomial distribution.

The following proposition shows that the first requirement of
$\Pidti \bar{d} \neq 0$ in \Cref{thm:qdti} is very weak.

\begin{proposition}\label{prop:tree-path}
  Let $N$ be a metric semidirected network.  If $N$ contains a tree path
  of strictly positive length between some pair of leaves, then
  $\Pidti \bar{d} \neq 0$.
\end{proposition}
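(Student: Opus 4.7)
The plan is to exhibit an explicit vector in $\Vdti$ that has nonzero inner product with $\bar{d}$, which immediately forces $\Pidti \bar{d} \neq 0$. The natural candidate is the standard basis vector indexing the taxon pair $(i,j)$ connected by the tree path.

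First I would recall the construction of a displayed tree: for each hybrid node we keep one parent edge and delete the others, then iteratively prune unlabelled leaves. Crucially, \emph{tree edges are never deleted in the hybrid-choice step}, and any edge lying on a path between two labelled leaves survives the pruning step. So if $\pi$ is a tree path between leaves $i$ and $j$ in $N$, every edge of $\pi$ is present in every displayed tree $T \in \mathcal{T}$. Since each $T$ is itself a tree, there is a unique path between $i$ and $j$ in $T$, and it must therefore coincide with $\pi$. In particular, $d_{ij}(T,\hb)$ is the same for every $T \in \mathcal{T}$ and for any choice of edge length vector $\hb$.

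Next, let $p$ be the index of the pair $(i,j)$ in the vectorization of pairwise distances, and let $e_p \in \RR^{\binom{n}{2}}$ be the corresponding standard basis vector. Then $e_p^\top d(T,\hb) = d_{ij}(T,\hb)$, which by the previous paragraph is independent of $T$. Hence $e_p \in \Vdti$ by definition.

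Finally, observe that $e_p^\top \bar{d} = \bar{d}_{ij} = \sum_{T \in \mathcal{T}} \gamma(T)\, d_{ij}(T) = d_{ij}(T_0)$ for any $T_0 \in \mathcal{T}$, and this equals the length of the tree path $\pi$, which is strictly positive by hypothesis. Writing $\bar{d} = \Pidti \bar{d} + (I - \Pidti)\bar{d}$ with $(I-\Pidti)\bar{d} \perp e_p$, we get $e_p^\top \Pidti \bar{d} = e_p^\top \bar{d} > 0$, so $\Pidti \bar{d} \neq 0$. The only step requiring a bit of care is the displayed-tree argument in the first paragraph (making sure neither the hybrid-edge deletions nor the pruning can alter a pure tree path between two labelled leaves); once that is pinned down, the remainder is a one-line projection argument.
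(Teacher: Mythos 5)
Your proposal is correct and follows essentially the same route as the paper: both exhibit the standard basis vector for the tree-path taxon pair as an element of $\Vdti$ (by arguing the tree path survives in every displayed tree, so that coordinate of $d(T,\cdot)$ is $T$-independent) and then conclude $\Pidti\bar{d}\neq 0$ from its strictly positive inner product with $\bar{d}$. Your justification that the path lies in every displayed tree (tree edges survive the hybrid-choice step and the pruning) is a valid, slightly more constructive phrasing of the paper's observation that no other up-down path can join that pair.
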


\begin{proof}
  Let the $m^{\text{th}}$ coordinate of $\bar{d}$ correspond to the taxon pair
  $i,j$ with the above-mentioned tree path $p$ between them in $N$,
  and let $s > 0$ be the length of $p$.
  Then $e_m \in \Vdti$, where $e_m$ is the vector
  with $m^{\text{th}}$ component $1$ and the rest $0$.
  This is because there cannot be any up-down path between taxa
  $i$ and $j$ other than $p$ (all nodes along $p$ must be tree
  nodes except possibly for the source of $p$). Therefore $p$ is in all
  displayed trees and $e_m^\top d(T,\hb)$
  is always the sum $s$ of the edge lengths along $p$, irrespective
  of $T \in \mathcal{T}$.
  Consequently, we have
  $\| \Pidti \bar{d} \| \geq \|e_m^\top \bar{d}\| = s > 0$, which implies
  $\Pidti \bar{d} \neq 0$.
\end{proof}

In fact, \Cref{thm:qdti} would continue to hold if $\Pidti$
was defined differently, as the projection $\Pi_V$ onto a nontrivial subspace
$V$ of $\Vdti$ such that $\Pi_V \bar{d} \neq 0$
(condition 1 in \Cref{thm:qdti}). Indeed, the proof of
\Cref{thm:qdti} only used the fact that $\Pidti \bar{d} \neq 0$, and the
property that $\Pidti d(T, \hb)$ does not depend on $T \in \mathcal{T}$.
For example, the proof of \Cref{prop:tree-path} shows that
$V=\langle e_m \rangle$ would be sufficient, where $m$ corresponds to the taxon pair
linked by a tree path in $N$.
This may simplify computations, as computing $\Pidti$ naively involves
iterating over all displayed trees, the number of which may grow exponentially
with the number of reticulations.
The choice of $V$ could
provide a trade-off between computational cost and accuracy.

\medskip
We can use \Cref{thm:qdti} and \Cref{prop:tree-path} to show that $\Qdti$ is
consistent for level-1 networks, as defined in \citet{2023XuAne_identifiability},
with certain conditions.

\begin{theorem}\label{thm:level-1}
  Let $N$ be a level-1 semidirected network without $2$- or $3$-cycles or
  degree-$2$ nodes, and take its set of tree edges $E_c(N)$ as calibrating set.
  Let observed distances come from the scaled displayed tree
  model, with a fixed number of genes $K$.  Assume that
  for each taxon pair $i,j$ and for each gene $k$,
  $\epsilon_{ijk} \to 0$ almost surely.
  Also, assume that the gene tree frequencies match the displayed tree probabilities:
  $\frac{1}{K}\sum_{k=1}^K \indi(T^{(k)} = T) = \gamma(T)$
  for all $T \in \mathcal{T}$.
  Then any minimizer $(\as, \bs)$ of
  $\Qdti(\ha, \hb)$ subject to $z^\top \ha = 1$ converges to $(c\alpha, cb)$
  almost surely,
  where $c = {(z^\top\alpha)}^{-1} > 0$.
\end{theorem}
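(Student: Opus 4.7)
The plan is to reduce to \Cref{thm:qdti} by verifying its two hypotheses for the class of level-1 semidirected networks without 2- or 3-cycles and without degree-2 nodes. Taking $E_c=E_c(N)$ to be the set of tree edges (so that every hybrid edge carries length $0$), the remaining assumptions of \Cref{thm:qdti}---finite $K$, $\epsilon_{ijk}\to 0$ almost surely, and the frequency-matching condition---already appear verbatim in the hypotheses of the present theorem. Thus the whole argument boils down to checking: (i) $\Pidti\bar d\neq 0$, and (ii) $b$ is identifiable from $\bar d$ given the topology of $N$ and the zero-length constraint on hybrid edges.

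For (i), I would invoke \Cref{prop:tree-path}, which reduces the question to the existence of a tree path of strictly positive length between some pair of leaves in $N$. If $N$ has no reticulation, this is immediate since $N$ is a tree. Otherwise, pick any cycle $C$; since $C$ has length at least $4$, the top node $t$ of $C$ is a tree node whose two immediate successors $u_1,u_2$ on $C$ are tree nodes joined to $t$ by tree edges. I would then show that from each of $u_1,u_2$ one can descend along tree edges (using the degree-$\geq 3$ condition to guarantee sufficient off-cycle incidence, and descending into the tree-edge subcomponent hanging below) until reaching two distinct leaves $\ell_1,\ell_2$, so that $\ell_1\!-\!\cdots\!-\!u_1\!-\!t\!-\!u_2\!-\!\cdots\!-\!\ell_2$ is a tree path of positive length.

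For (ii), I would invoke the identifiability results of \citet{2023XuAne_identifiability}, whose hypotheses (level-1, no 2- or 3-cycles, no degree-2 nodes) match the ones in the theorem at hand and which establish that the tree-edge length vector is uniquely determined by the average pairwise-distance vector $\bar d$ once hybrid edges are constrained to have length $0$. This is exactly condition~2 of \Cref{thm:qdti}. Applying \Cref{thm:qdti} then yields the announced almost-sure convergence of any minimizer $(\as,\bs)$ of $\Qdti$ under the constraint $z^\top\ha=1$ to $(c\alpha,cb)$ with $c=(z^\top\alpha)^{-1}$.

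I expect the main obstacle to be step~(i). In networks with many reticulations, intermediate tree nodes on a cycle can feed into further hybrid nodes via hybrid edges, so the descent from $u_1,u_2$ to leaves along purely tree edges must be argued carefully; a clean way is probably an induction on the number of reticulations, or a direct analysis of the tree-edge subgraph showing that at least one of its connected components contains two or more leaves. Step~(ii), by contrast, should be a direct citation, and the final application of \Cref{thm:qdti} is then automatic.
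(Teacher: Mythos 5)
Your overall strategy is exactly the paper's: reduce to \Cref{thm:qdti}, getting condition~1 from \Cref{prop:tree-path} via a tree path between two leaves, and condition~2 from the identifiability results of \citet{2023XuAne_identifiability}. Two places in your sketch, however, do not go through as written.

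For step (i), your cycle-top construction contains a false sub-claim: in a level-1 network one arm of a cycle may consist of a single hybrid edge from the top node $t$ directly to the hybrid node $h$ (e.g.\ a 4-cycle $t\to h$, $t-a-b\to h$), in which case one of the two ``immediate successors of $t$ on $C$'' is $h$ itself, reached by a hybrid edge, and you cannot descend along tree edges on that side. The ingredient you are missing for the descent is the tree-child property: the paper's \Cref{lem:tree-path-exist} reroots $N$, invokes Lemma~6.11.11 of \citet{2010HusonRuppScorn} to conclude that a level-1 network with no 2-cycles is tree-child, arranges for the root to have two tree children (rerooting at the unique tree child otherwise), and takes tree paths from each of them down to distinct leaves. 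This needs only the no-2-cycle hypothesis; the no-3-cycle and degree-$\geq 3$ hypotheses are what condition~2 consumes. For step (ii), the citation is not as direct as you suggest. Theorem~12 of \citet{2023XuAne_identifiability} identifies the \emph{mixed network representation} from average distances, which determines the lengths of all edges \emph{except} those in or incident to a 4-cycle; recovering those requires an additional step, solving Equation~(10) of that paper using the known topology and inheritance probabilities (the paper's own 4-sunlet example is a reminder that 4-cycles are the delicate case). The paper also makes explicit a small reduction you skip: identifiability over $b\succeq 0$ is equivalent to $\bar A$ having full column rank, so it suffices to argue on the open set $b\succ 0$; and the single-leaf case with $E_c=\emptyset$ is dispatched separately. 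None of this changes your plan, but both steps need these repairs to be complete.
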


To prove \Cref{thm:level-1} we will use the following lemma, proved
in \Cref{sec:techproofs}.

\begin{lemma}\label{lem:tree-path-exist}
  Let $N$ be a level-1 semidirected network without $2$-cycles and with at least
  $2$ leaves. Then there exists a tree path between some pair of leaves of $N$.
\end{lemma}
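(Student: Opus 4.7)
The plan is to argue by induction on the number of reticulations $h$ of $N$, using the blob tree of $N$ as the main structural tool. For the base case $h = 0$, $N$ is a tree, so the path between any two labeled leaves consists only of tree edges and is therefore a tree path.

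For the inductive step with $h \geq 1$, since $N$ is level-$1$, its cycles are node-disjoint, so contracting each cycle of $N$ to a super-node yields a well-defined blob tree $T_B$ whose leaves coincide with the labeled leaves of $N$. If $T_B$ has two labeled leaves whose connecting path passes through no blob super-node, the corresponding path in $N$ uses only tree edges, completing this case. Otherwise, I would select a peripheral blob super-node $B$ in $T_B$ (one with at most one non-leaf neighbor, which exists because $T_B$ has finite diameter). Let $C$ be the cycle corresponding to $B$, with hybrid $v$ and top $t_C$; the no-$2$-cycle hypothesis gives $|C| \geq 3$, so $C$ contains at least two non-hybrid cycle nodes, joined by tree edges along the cycle arc from $t_C$ around to $v$'s cycle-neighbors (the cycle minus its two hybrid edges into $v$). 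Each labeled leaf of $N$ adjacent to $B$ in $T_B$ is attached in $N$ via a tree-edge path to a cycle node of $C$, either to $v$ itself (``hybrid-side'') or to a non-hybrid cycle node (``path-side''). If two or more leaves are path-side, they share a tree-edge connected component and a tree path runs between them through the cycle arc; if two or more leaves are hybrid-side, the tree path runs through $v$. In the remaining case with at most one leaf on each side, I would reduce $N$ to a smaller level-$1$ network $N'$ with $h - 1$ reticulations (by collapsing $C$ onto its cycle arc and reattaching $v$'s descendants accordingly), then apply the inductive hypothesis to $N'$ to obtain a tree path lifting back to $N$.

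I expect the main obstacle to be this last reduction step: ensuring the reduced $N'$ still satisfies the lemma's hypotheses (level-$1$, no $2$-cycles, and at least two labeled leaves), and verifying that a tree path in $N'$ corresponds to a tree path in $N$. The peripheral choice of $B$, together with careful handling of any degree-$2$ nodes produced by the reduction (so as not to introduce new parallel hybrid edges), will be the key technical points.
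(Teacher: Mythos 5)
Your induction via the blob tree is a genuinely different strategy from the paper's, but it is incomplete exactly where you say you expect trouble, and that deferred step is not a routine verification --- it is the entire content of the lemma. The specific reduction you sketch (``collapsing $C$ onto its cycle arc and reattaching $v$'s descendants accordingly'') does not preserve the property you need: once $v$'s pendant subtree is reattached to the arc by a tree edge of $N'$, the inductive hypothesis may return a tree path in $N'$ that climbs out of that subtree through the new attachment point, and in $N$ any such path must traverse one of $v$'s two parent hybrid edges, so it is not a tree path in $N$. The alternative reduction --- deleting $v$ together with its pendant descendants --- does lift tree paths correctly, but then you must show $N'$ still has two leaves, and this is where the argument genuinely bites. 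Consider a $3$-cycle $\{u_1,u_2,v\}$ in which $u_2$ has degree $2$ (both its incident edges on the cycle), $u_1$ carries a pendant leaf $x$, and the hybrid $v$ carries a pendant leaf $y$: this is level-$1$, has no $2$-cycle and has two leaves, yet every path from $y$ to $x$ must leave $v$ through one of its parent hybrid edges, so no tree path exists and no reduction can manufacture one. In other words, your ``remaining case'' cannot always be closed from the hypotheses you actually use; some extra structural input (the exclusion of degree-$2$ nodes, as assumed in \Cref{thm:level-1} and implicit in the tree-child machinery) is required, and your proposal never identifies it.

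For contrast, the paper's proof supplies that input in one stroke: reroot $N$, observe that a level-$1$ network without $2$-cycles is tree-child (Lemma~6.11.11 of Huson, Rupp and Scornavacca, under their conventions on node degrees), arrange for the root to have two distinct tree children, and descend from each through tree edges to a leaf; concatenating the two descents through the root gives the desired tree path. If you want to keep your induction, you would need to (i) replace the ``reattach'' reduction by deletion of $v$'s pendant part, (ii) show that in the remaining case the cycle is a $3$-cycle whose non-hybrid nodes each carry either a pendant leaf or the connection to the rest of the network, so that $N'$ retains at least two leaves, and (iii) invoke the degree condition to exclude the configuration above. As written, the proposal does not constitute a proof.
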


\begin{proof}[Proof of \Cref{thm:level-1}]
  If $N$ has a single leaf, then $E_c(N) = E(N) = \emptyset$ and the statement
  is trivial. Now assume that $N$ has at least 2 leaves.
  By \Cref{lem:tree-path-exist}, $N$ has a tree path between some pair of leaves,
  so by \Cref{prop:tree-path}, condition~1 of \Cref{thm:qdti} is met.
  To apply \Cref{thm:qdti},
  it suffices to show that condition~2 is also met, that is,
  $b \succeq 0$ is identifiable from average distances.  Since
  this is equivalent to $\bar{A}$ being full-rank, it suffices to show that
  $b$ is identifiable when restricted to an open subset.
  As a result, we shall assume $b \succ 0$ and show it is identifiable from average distances.

  Since $E_c$ consists of all the tree edges, the hybrid edges are fixed to have
  length $0$.  In other words, we only consider zipped-up networks
  \citep{2023XuAne_identifiability}.  By Theorem~12 in
  \citet{2023XuAne_identifiability}, from the average distances we can identify
  the mixed network representation of $N$, which contains information about both
  edge lengths and the topology of the network.
  In particular, the mixed network representation yields the length for all edges
  that are not in or incident to a $4$-cycle.  With knowledge of
  the topology of $N$ and its hybridization parameters, the lengths of remaining
  edges can be obtained by undoing the mixed network representation of
  $4$-cycles, by solving for $\mu_a, \mu_b, \mu_g, \mu_h, t_a, t_b$ in
  Equation~(10) in \citet{2023XuAne_identifiability}.  This establishes the
  identifiability of $b$.  The conclusion follows by \Cref{thm:qdti}.
\end{proof}

\section{Constrained network calibration}
\label{sec:constrained}

The \emph{constrained} calibration problem uses as input a
metric rooted network $N^+$ with
known topology and hybridization parameters.
We assume that the true length of an edge $e=(u,v)$ in $N^+$
is derived from the \emph{age} of nodes $u$ and $v$, as
$\ell(e) = \tau(u) - \tau(v)$ where $\tau$ is a \emph{node age function}.
To ensure that $\ell(e)\geq 0$, $\tau$ must satisfy the natural
condition that a child is no older than its parent:
\[\tau(u) \geq \tau(v) \mbox{ for every edge } e=(u,v) \mbox{ in } N^+. \]
In this case $N^+$ is \emph{time-consistent}, in that all paths between two fixed nodes
$u$ and $v$ share the same length, here $\tau(u)-\tau(v)$.
This is consistent with branch lengths being proportional to time.

In addition, for any leaf $v$, $\tau(v)$ should be fixed to the
known age of $v$, the sampling time at which $v$ was observed.
For example, if all leaves are sampled at the present-day and therefore
share the same age, it is typical to anchor time with age 0 at present-day:
fixing $\tau(v)=0$ for all leaves $v$ in $N^+$.

As before, we observe genetic distances
$\delta_{ij}^{(k)}$ between taxa $i, j \in [n]$
from genes $k = 1, \ldots, K$ from the scaled
displayed tree model.
In the constrained calibration problem,
we wish to infer the node ages $\tau(v)$ for all internal (non-leaf) nodes
$v$ of $N^+$.
In effect, in this constrained problem, we assume a molecular clock for
each gene, because $N^+$ is assumed time-consistent and the
gene's expected edge lengths in~\eqref{eq:scaleddisplayetreemodel}
are proportional to edge lengths in $N^+$.


We enumerate the nodes in $V$ as $v_1, \ldots, v_{|V|}$ and denote the
\emph{node age vector} as $t = (\tau(v_1), \ldots, \tau(v_{|V|}))$.  
The requirements on $\tau$ can then be expressed as
$C_1 t \succeq 0$ and $C_2 t = t_0$, where $C_1$ and $C_2$ are matrices with
$C_1 t \succeq 0$ encoding $\tau(u) \geq \tau(v)$ for any edge $(u, v)$ and
$C_2t = t_0$ encoding leaf ages being fixed to their known sampling time $t_0$.

We may add more assumptions to the network, such as assuming hybrid edges having
length $0$, or equivalently that $\tau(u) = \tau(v)$ for a hybrid edge $(u, v)$.
Such assumptions can be reflected by modifying the constraint matrix $C_2$.
Since the assumption that certain edges have length $0$ can be encoded in $C_2$,
we no longer need to specify a calibrating set $E_c$ as in the constrained problem.

In this section we focus on the case when $t_0 = 0$, i.e.\ all the
leaves are sampled at present day.
In this case, similar to the
unconstrained calibration problem, the node ages are only identifiable up to a
multiplicative factor, and are interpreted as relative ages.  This is natural
when we lack fossil data, serially-sampled leaves,
and knowledge about substitution rate per year.

\bigskip

We now turn to adapting the composite criterion $\Qdti$
to the constrained calibration problem.
With a slight abuse of notation, in this section we use $d(T, \htt)$ (and later
similarly $\bar{d}(\htt)$ and $\Qdti(\ha, \htt)$) to denote the vector of
pairwise distances on displayed (rooted) tree $T \in \mathcal{T}$ of $N^+$ if the node
age vector is set to $\htt$.  We then have
\[ d(T, \htt) = B(T)\htt,\]
where $B(T)$ is a matrix with $B(T)_{pq} = 2$ if the node $v_q$ is the most
recent common ancestor, in tree $T$, 
of the taxon pair corresponding to the $p^{\text{th}}$
entry of $d(T, \htt)$, and $B(T)_{pq} = 0$ otherwise.
%
We similarly define
$$\bar{B} = \bar{B}(N) = \sum_{T \in \mathcal{T}} \gamma(T) B(T)$$
and
$$\bar{d}(\htt) = \sum_{T \in \mathcal{T}}d(T, \htt) \;,$$
from which we have $\bar{d}(\htt) = \bar{B}\htt$.

Since distance vectors on displayed trees are linear transformations of
the node age vector, the average distance vector is also linear in node ages,
and the $\Qdti$ criterion can be easily adapted.
By replacing $\bar{A}\hb$ by $\bar{B}\htt$ and adding the
constraints on $\htt$, we obtain the criterion below, for some
fixed $\lambda > 0$.
\begin{equation}
  \label{eq:qdti-cons}
  \Qdti(\ha, \htt) = \norm{(I - \Pidti) \left(\frac{1}{K} \sum_{k=1}^K \ha_k
      \delta^{(k)} - \bar{B} \htt\right)}^2 +
  \lambda \sum_{k=1}^K \norm{\Pidti(\ha_k \delta^{(k)} - \bar{B}\htt)}^2\,.
\end{equation}
We will seek to minimize $\Qdti(\ha, \htt)$
subject to the following linear constraints,
given some pre-specified $z \succ 0$:
$z^\top \ha = 1$, $C_1\htt \succeq 0$, and $C_2 \htt = 0$.

We have the following parallel to \Cref{thm:qdti}.  The proof proceeds in the
same manner and is omitted for brevity.

\begin{theorem}\label{thm:qdti-cons}
  Let $N^+$ be a metric rooted network whose node age vector $t$
  satisfies $C_1 t \succeq 0$ and $C_2 t = 0$.
  Suppose that $N^+$ satisfies the following:
  \begin{enumerate}
  \item $\Pidti \bar{d} \neq 0$;
  \item the node age vector $t$ is identifiable from average distances, given
    the topology of $N^+$ and
    the constraints $C_1 t \succeq 0$ and $C_2 t = 0$.
  \end{enumerate}
  Let observed distances come from the scaled displayed tree
  model, with a fixed number of genes $K$.  Assume that
  for each taxon pair $i,j$ and for each gene $k$,
  $\epsilon_{ijk} \to 0$ almost surely.
  Also, assume that gene tree frequencies match displayed tree probabilities,
  that is,
  $\frac{1}{K}\sum_{k=1}^K \indi(T^{(k)} = T) = \gamma(T)$
  for all $T \in \mathcal{T}$.
  Then any minimizer $(\as, \ts)$ of
  $\Qdti(\ha, \htt)$ subject to $z^\top \ha = 1$ converges to $(c\alpha, ct)$
  almost surely,
  where $c = {(z^\top\alpha)}^{-1} > 0$.
\end{theorem}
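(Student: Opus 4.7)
The plan is to mirror the proof of \Cref{thm:qdti} with the substitutions $\hb \mapsto \htt$ and $\bar{A}\hb \mapsto \bar{B}\htt$, while carefully tracking the extra linear constraints $C_1\htt \succeq 0$ and $C_2\htt = 0$. As before, I would split the argument into a noiseless step, which pins down a unique feasible minimizer, and a perturbation step that extends the conclusion to the regime $\epsilon_{ijk} \to 0$.

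For the noiseless case, $\Qdti \geq 0$ means it suffices to characterize the zero set under the constraint $z^\top\ha = 1$. From the second term of \eqref{eq:qdti-cons} vanishing I would obtain $\Pidti(\ha_k\delta^{(k)} - \bar{B}\htt) = 0$ for every $k$. Using $\delta^{(k)} = r_k d(T^{(k)})$ and the defining property $\Pidti d(T) = \Pidti \bar{d}$ for all $T \in \mathcal{T}$, this gives $\ha_k r_k \Pidti\bar{d} = \Pidti\bar{B}\htt$, a quantity independent of $k$. Assumption~1 ($\Pidti\bar{d}\neq 0$) then forces $\ha_k r_k$ to be constant in $k$, and the linear constraint $z^\top\ha = 1$ fixes that constant at $c^{-1}$ with $c = z^\top\alpha > 0$, yielding $\ha = c^{-1}\alpha$.

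Next I would combine the first term of \eqref{eq:qdti-cons} being zero with the frequency assumption: the latter gives $\frac{1}{K}\sum_k d(T^{(k)}) = \bar{d}$ and therefore $\frac{1}{K}\sum_k \ha_k \delta^{(k)} = c^{-1}\bar{d}$. This forces $(I-\Pidti)\bar{B}\htt = c^{-1}(I-\Pidti)\bar{d}$, and combined with $\Pidti\bar{B}\htt = c^{-1}\Pidti\bar{d}$ from the previous step, summing yields $\bar{B}\htt = c^{-1}\bar{d} = \bar{B}(c^{-1}t)$. Before invoking identifiability I would verify that $c^{-1}t$ remains feasible: since $c > 0$ and the constraints $C_1 t \succeq 0$, $C_2 t = 0$ are homogeneous (this uses $t_0 = 0$), scaling by $c^{-1}$ preserves them. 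Assumption~2 (identifiability of $t$ from average distances under these constraints) then concludes $\htt = c^{-1}t$.

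Finally, the extension to $\epsilon_{ijk} \to 0$ proceeds exactly as in \Cref{thm:erable}: $\Qdti(\ha,\htt;\epsilon)$ is convex quadratic in $(\ha,\htt)$, the feasible set cut out by $z^\top\ha = 1$, $C_1\htt \succeq 0$, $C_2\htt = 0$ is closed, and uniform convergence of $\Qdti(\,\cdot\,;\epsilon)$ to $\Qdti(\,\cdot\,;0)$ on compacts follows from elementary bounds. Then any sequence of constrained minimizers accumulates at the unique noiseless minimizer by Theorem~2 of \citet{kall86}. The main subtlety I anticipate is the homogeneity check that keeps $c^{-1}t$ feasible (which is where $t_0 = 0$ is essential) and the accompanying argument that constrained minimizers remain in a bounded neighbourhood of $(c\alpha, ct)$ as $\epsilon \to 0$; the latter reduces to coercivity of $\Qdti$ in the feasible directions not already fixed by the constraints, which is guaranteed by assumptions~1 and~2.
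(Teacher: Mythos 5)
Your proposal is correct and follows exactly the route the paper intends: the paper omits this proof, stating only that it ``proceeds in the same manner'' as \Cref{thm:qdti}, and your argument is precisely that adaptation with $\bar{A}\hb$ replaced by $\bar{B}\htt$. Your explicit check that $c^{-1}t$ remains feasible because the constraints are homogeneous when $t_0=0$ is the one genuinely new point in the constrained setting, and you handle it correctly.
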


\medskip

We also have a parallel to \Cref{thm:level-1}.

\begin{theorem}\label{thm:level-1-cons}
  Let $N^+$ be a rooted level-1 network without $2$- or $3$-cycles, and
  without degree-$2$ nodes except possibly the root,
  and with the root being the LSA of $N^+$.
  Suppose $N^+$ is time-consistent with node age vector $t$.
  Suppose further that each hybrid has the same age as its parents,
  that is, the node age function
  $\tau$ satisfies $\tau(u) = \tau(v)$ for any
  hybrid edge $(u, v)$.
  Let $C_1$ and $C_2$ be constraint matrices
  that encode the requirements
  $C_1 t \succeq 0$ for non-negative edge lengths and
  $C_2 t = 0$ for leaf ages and hybrid ages.

  Let observed distances come from the scaled displayed tree
  model, with a fixed number of genes $K$.  Assume that
  for each taxon pair $i,j$ and for each gene $k$,
  $\epsilon_{ijk} \to 0$ almost surely.
  Assume that gene tree frequencies match displayed tree probabilities,
  that is,
  $\frac{1}{K}\sum_{k=1}^K \indi(T^{(k)} = T) = \gamma(T)$
  for all $T \in \mathcal{T}$.
  Then any minimizer $(\as, \ts)$ of
  $\Qdti(\ha, \htt)$ subject to $z^\top \ha = 1$ converges to $(c\alpha, ct)$
  almost surely,
  where $c = {(z^\top\alpha)}^{-1} > 0$.
\end{theorem}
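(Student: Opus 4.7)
The plan is to mirror the proof of \Cref{thm:level-1} and verify the two conditions of \Cref{thm:qdti-cons}. I would first pass to the semidirected network $N$ induced from $N^+$. Since $N^+$ is level-1 without 2- or 3-cycles, has no degree-2 nodes except possibly the root, and the root coincides with the LSA of the taxa, $N$ is level-1, free of 2-cycles, 3-cycles, and degree-2 nodes. If $N$ has fewer than two leaves the statement is trivial, so assume at least two leaves. \Cref{lem:tree-path-exist} then provides a tree path between some pair of leaves (tree edge lengths can be assumed strictly positive by reducing to an open subset, exactly as in the proof of \Cref{thm:level-1}), and \Cref{prop:tree-path} gives $\Pidti \bar{d} \neq 0$, establishing condition~1.

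For condition~2, I would show identifiability of the node age vector $t$ from $\bar{d} = \bar{B} t$ by factoring the problem through tree-edge lengths. Let $M$ be the linear map from $t$ to the tree-edge length vector $b$ defined by $\ell(u,v) = \tau(u) - \tau(v)$, restricted to the affine subspace $\{t : C_2 t = 0\}$ encoding the leaf-age and hybrid-edge-length-zero constraints. On this feasible set, $\bar{B} t = \bar{A}(Mt)$, so it suffices to prove (i) $\bar{A}$ has full column rank on the tree-edge space of $N$, and (ii) $M$ is injective on the feasible set. Part~(i) is exactly what the proof of \Cref{thm:level-1} established via Theorem~12 of \citet{2023XuAne_identifiability} applied to $N$ with calibrating set equal to its tree edges, undoing the mixed network representation of 4-cycles as in that proof. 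Part~(ii) follows from time-consistency of $N^+$: starting from the leaves whose ages are fixed to $0$ and propagating upward using $\tau(u) = \tau(v) + \ell(u,v)$ on tree edges and $\tau(u) = \tau(v)$ on hybrid edges uniquely determines $\tau$ on every internal node, since any two directed paths from an internal node to a leaf share the same total length under a time-consistent $b$.

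The main obstacle is justifying that this propagation procedure is well-defined for every feasible $b$, i.e. that the age assigned to an internal node does not depend on the path chosen to a leaf; this is exactly what time-consistency of $N^+$ guarantees, and it is preserved under perturbations that keep $t$ in the feasible set. With both conditions of \Cref{thm:qdti-cons} verified, the theorem applies directly and yields the claimed almost sure convergence of any minimizer $(\as, \ts)$ to $(c\alpha, ct)$ with $c = (z^\top \alpha)^{-1}$.
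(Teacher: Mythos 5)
Your overall strategy is the paper's: reduce to \Cref{thm:qdti-cons}, get condition~1 from \Cref{lem:tree-path-exist} and \Cref{prop:tree-path} applied to the induced semidirected network $N$, and get condition~2 by reducing node-age identifiability to the edge-length identifiability already established in the proof of \Cref{thm:level-1}. That part is fine.

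The gap is in the case the theorem explicitly allows, namely a root $r$ of degree $2$, and your factorization $\bar{B}\htt = \bar{A}(M\htt)$ does not survive it. If $e_1=(r,v_1)$ and $e_2=(r,v_2)$ are the two root edges of $N^+$, then every up--down path between two leaves uses $e_1$ if and only if it uses $e_2$, so the columns of the inheritance matrix of $N^+$ indexed by $e_1$ and $e_2$ are identical and your claim (i), full column rank, is false for $N^+$. If instead you take $\bar{A}$ for the induced semidirected $N$ (where $r$ is suppressed and $e_1,e_2$ are merged), then (i) does hold by the proof of \Cref{thm:level-1}, but your map $M$ must now send $t$ to the edge lengths of $N$, and your claim (ii) breaks: the leaf-to-root propagation $\tau(u)=\tau(v)+\ell(u,v)$ never reaches $r$, because only the sum $\ell(e_1)+\ell(e_2)$ is available, not the individual lengths. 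The missing step --- which is the second half of the paper's proof --- is to recover $\tau(r)$ (equivalently, to split the sum) from time-consistency: taking paths $p_1$ and $p_2$ from $r$ through $e_1$ and $e_2$ to leaves $x_1$ and $x_2$, both must have length $\tau(r)$, which pins down $\ell(e_1)$ and $\ell(e_2)$ separately (explicitly, $2\tau(r)=\ell(e_1)+\ell(e_2)+\tau(v_1)+\tau(v_2)$ with $\tau(v_1),\tau(v_2)$ already determined from below). Your ``main obstacle'' paragraph addresses well-definedness of the propagation, which is not where the difficulty lies; once this root case is added, the argument is complete and coincides with the paper's.
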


\begin{proof}
  Using \Cref{thm:qdti-cons} and \Cref{prop:tree-path}, we only need to show
  that the node age vector $t$ is identifiable from average distances.
  Let $r$ be the root of $N^+$, and $E_r$ the set of edges incident to
  $r$.  Consider $N$, the semidirected network induced from $N^+$.
  $N$ and $N^+$ have the same average distances.  Furthermore, with the possible
  exception of edges in $E_r$ when $r$ is of degree-$2$ and gets suppressed in
  $N$, each edge in $N^+$ corresponds to an edge of the same length in $N$.

  By the proof of \Cref{thm:level-1}, edge
  lengths in $N$ are identifiable from average distances.
  If $r$ has degree $3$ or higher in
  $N^+$, this determines all the edge lengths in $N^+$ and in turn the node
  ages.
  If $r$ has degree $2$, let $e_1$ and $e_2$ be its incident edges.
  All other edges in $N^+$ have their lengths determined,
  and $\ell(e_1) + \ell(e_2)$ is also determined.
  Let $x_1$ (resp.~$x_2$) be a leaf descendant of $e_1$ (resp.~$e_2$),
  and $p_1$ (resp.~$p_2$) a path from $r$ to $x_1$ (resp.~$x_2$).
  Then $\ell(e_1)$ and $\ell(e_2)$ are determined by the constraint that
  $p_1$ and $p_2$ have the same length, $\tau(r)$.  This again
  determines all the node ages.
\end{proof}

\section{Discussion}

The main advantages of the scaled displayed tree model
is its accommodation of rate variation across genes, with evolutionary rate
$r_k$ for gene $k$, and its accommodation of rate variation across lineages,
leveraged in the unconstrained calibration problem when the network is
not required to be time-consistent.
The scaled displayed tree model has its limitations.
In particular, it does not allow for
gene-by-linage interactions \citep{rodriguez-trelles01,bedford08}
on evolutionary rates, which occurs when
the relative rates of genes vary from one lineage to another.
In practice, methods such as \texttt{ClockstaRX} \citet{2024Duchene-ClockstaRX} can be
used to determine if 
the scaled displayed tree model is violated,
and if so, to partition the data into groups of genes that follow
a single ``pacemaker''. Network calibration assuming the scaled displayed tree
model could then proceed on each group of genes separately, and the different
calibrations could be summarized as a second step, similarly to the 2-step
procedure proposed by \citet{2024Arasti-TCMM} to calibrate a species tree.

\medskip

The universal pacemaker model was formulated by
\citet{2012Snir-pacemaker} in terms of edge lengths in
gene trees with multiplicative residual variation whose log-normal
distribution. In terms of pairwise distances, an analogous model is:
\begin{equation}\label{eq:pacemakerdist}
 \delta_{ij}\sk = r_k d_{ij}(T\sk) \, \eta_{ijk}
\end{equation}
where $\eta_{ijk}$ are independent log-normal random variables with mean $1$,
and where gene $k$ evolves under some displayed tree $T\sk$,
extending the model by \citet{2012Snir-pacemaker} to allow for
gene tree discordance.
We can rewrite~\eqref{eq:pacemakerdist} as
\[
  \log(\delta_{ij}\sk) = \log(r_k) + \log(d_{ij}(T\sk)) + \epsilon_{ijk},
\]
where $\epsilon_{ijk}$ are independent, normally distributed and with mean $0$.
A corresponding least-squares criterion could be formulated, but it is
unclear how averaging across genes $k$ should be handled
when using log-transformed distances.
Nonetheless, our results show that $\Qdti$ is consistent under the universal
pacemaker model \eqref{eq:pacemakerdist},
because it is equivalent to the scale displayed tree model
\eqref{eq:scaleddisplayetreemodel} with
$\epsilon_{ijk} = d_{ij}(T\sk) (\eta_{ijk} - 1)$.
As long as $\eta_{ijk} \to 1$ almost surely,
the estimate from minimizing $\Qdti$ is consistent.

\medskip 

When using the criterion 
proposed here, one may wish to account for
variable estimation error across genes or across taxon pairs.
Such heterogeneous variance of $\epsilon_{ijk}$ in the
scaled displayed tree model could be accommodated using
different weights into the criterion.
For example, estimated distances should be more precise on
long genes than short genes. To account for this variable precision across
genes, we can modify $\Qdti$ to obtain a weighted criterion $\Qdti^W$ similar to \citet{binet16},
by weighing the squared error on gene $k$ by $N_k$,
the alignment length of gene $k$, as follows:
\[
    \Qdti^W(\ha, \hb) = \norm{(I - \Pidti) \left(\frac{1}{K} \sum_k \ha_k
      \delta^{(k)} - \bar{A} \hb\right)}^2 + \lambda \sum_k N_k \norm{\Pidti(\ha_k
    \delta^{(k)} - \bar{A}\hb)}^2\;.
\]
In $\Qdti^W$, the first term is the same as in $\Qdti$, and the second term has
weights $N_k$ to account for gene lengths.

Accounting for different variances across taxon pairs is more challenging.
Theoretically, if $\var(\epsilon_{ijk}) = \sigma_{ij}^2$ does not depend on $k$,
then we may modify the $\ell_2$-norm used in $\Qdti$ to the $\|\cdot\|_W$ norm,
defined as $\|x\|_W^2 = x^\top W x$ for $x \in \RR^{\binom{n}{2}}$, where $W$ is
the diagonal matrix with $\sigma_{ij}^{-2}$ on the diagonal.  The projection
$\Pidti$ also needs to be adapted, to project onto $\Vdti$ with respect to the
inner product $\langle x, y \rangle_W = x^\top W y$.
However, it is unclear how to estimate $\sigma_{ij}$.
Standard methods can estimate the variance of 
$\delta_{ij}\sk$ from the sequence alignment \citep{bulmer91}, 
but here we need the variance of the scaled genetic distances $\ha_k \delta_{ij}\sk$.
It is also unclear how the criterion can be adapted when
$\var(\epsilon_{ijk})$ also depends on $k$.
\medskip 

In practice, the choice of $\lambda$ in the definition of $\Qdti$
affects the estimated calibrated edge lengths. This choice
corresponds to a trade-off between
fitting the scaled distances from each gene on $\Vdti$ versus fitting the
average scaled distances on $\Vdti^\perp$.
Since the stochastic variance of the average distance 
is smaller than that of the distance from an individual gene
by a factor $1/K$ where $K$ is the number of genes,
$\lambda = K^{-1}$ is a reasonable choice.  An optimal choice of
$\lambda$ would depend on properties beyond consistency, and requires further
investigation.

\section*{Acknowledgements}

We thank Celine Scornavacca for motivating discussions about \texttt{ERaBLE}
and the need to extend it to networks.

\section*{Fundings}

This work was supported in part by the National Science Foundation
(DMS 2023239) and by a H. I. Romnes faculty fellowship
to C.A. provided by the University of Wisconsin-Madison Office of the
Vice Chancellor for Research with funding from the
Wisconsin Alumni Research Foundation.

\section*{Conflict of interest disclosure}

The authors declare that they comply with the PCI rule of having no financial conflicts of interest in relation to the content of the article. 

\appendix
\appendixpage
\renewcommand{\thesection}{\Alph{section}} 

\newcommand{\tb}{\tilde{b}}
\newcommand{\tbe}{\tilde{\beta}}
\newcommand{\ta}{\tilde{\alpha}}
\newcommand{\gt}{\gamma_{\mathcal{T}}}

\section{Fitting average distance directly}
\label{sec:qavg}

In this section we characterize when $\Qavg$ can be consistent,
in the weak sense that its output estimates are correct in the
absence of noise in the input, and discuss implications.
First, when optimizing $\Qavg$, the scales $\alpha_k$, or
equivalently the gene-specific rates $r_k$, cannot be estimated individually.
This is
because when a displayed tree is sampled multiple times, say
$T^{(i)} = T^{(j)} \in \mathcal{T}$, then
$\frac{1}{K} \sum_k \ha_k \EE\delta^{(k)} = \frac{1}{K} \sum_k \ha_k r_k d(T^{(k)},
b)$ yields the same expected average distance estimate
as long as $\ha_i r_i + \ha_j r_j$ remains the same.
Consequently we will focus on when we can recover the edge lengths $b$.

In the following theorem, we characterize when $\Qavg$ returns the correct edge
lengths under optimal conditions.  We establish some notations first.

\bigskip
We enumerate the set of displayed trees $\mathcal{T}$ in some fixed order
$T_1, \ldots, T_L$, then collect their probabilities in vector
$\gt = (\gamma(T_1), \ldots, \gamma(T_L))$, and collect their distance vectors
$d(T), T \in \mathcal{T}$ as columns to form the $\binom{n}{2} \times L$ matrix
$\Ddisp = [d(T_1) \ldots d(T_L)]$.
Then $\bar{d}$ can be written as
$\bar{d} = \bar{A} b = \Ddisp \gt$.

For a matrix $M$, we use $\im^+(M) = \{Mx | x \succeq 0\}$ to denote the cone
contained in the image $\im(M)$ of $M$.  For a cone $C \in \RR^k$, let
$\dim(C)$ be the dimension of the smallest linear subspace that contains $C$.

\begin{theorem}
  \label{thm:qavg}
  Let $N$ be metric semidirected network with calibrating set $E_c$ and edge
  length vector $b \neq 0$.
  Suppose that $\epsilon_{ijk} = 0$, that
  $\bar{A}$ has full column rank, and that all displayed trees are
  sampled, i.e.\ $\{T^{(k)}: k = 1, \ldots, K\} = \mathcal{T}$.

  Let $z \succ 0$ be a fixed vector of length $K$.  Then $\bs = cb$ for some
  $c > 0$ holds for any minimizer $(\as, \bs)$ of $\Qavg(\ha, \hb)$ in
  \eqref{eqn:qavg}, subject to $z^\top \ha = 1$, $\ha \succeq 0$, and
  $\hb \succeq 0$, if and only if
  $\dim\left( \im^+(\Ddisp) \cap \im^+(\bar{A}) \right) = 1$.
\end{theorem}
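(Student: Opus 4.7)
The plan is to characterize the set of minimizers of $\Qavg$ directly through the cone intersection $\im^+(\Ddisp)\cap\im^+(\bar{A})$, using that $\bar{d}=\bar{A}b=\Ddisp\gt$ always lies in this intersection. First I would establish that $\min \Qavg = 0$ and is attained, so that every minimizer $(\as,\bs)$ satisfies $\bar{A}\bs=\frac{1}{K}\sum_k\as_k\delta^{(k)}$, with this common value in $\im^+(\Ddisp)\cap\im^+(\bar{A})$. To produce a feasible point with $\Qavg=0$, I would use that all displayed trees are sampled (say $T_l$ is sampled $n_l\geq 1$ times) to set $\ha_k\propto \gamma(T_l)/(n_lr_k)$ when $T^{(k)}=T_l$, giving $\frac{1}{K}\sum_k\ha_kr_kd(T^{(k)})=\bar{d}$, then jointly rescale $\ha$ and $\hb:=b$ by a common positive constant to enforce $z^\top\ha=1$.

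For the \emph{if} direction, assume $\dim(\im^+(\Ddisp)\cap\im^+(\bar{A}))=1$. Since $\bar{d}\neq 0$ (by $b\neq 0$ and $\bar{A}$ full column rank) lies in this intersection, the intersection equals the ray $\RR_{\geq 0}\bar{d}$. Thus any minimizer has $\bar{A}\bs=s\bar{d}=s\bar{A}b$ for some $s\geq 0$, and full column rank of $\bar{A}$ yields $\bs=sb$. To exclude $s=0$, I would observe that $\bs=0$ forces $\sum_k\as_kr_kd(T^{(k)})=0$; since each summand is componentwise non-negative, this requires $\as_k=0$ whenever $d(T^{(k)})\neq 0$, which combined with $z\succ 0$ and $z^\top\as=1$ forces support of $\as$ to lie on indices with $d(T^{(k)})=0$, a case which does not arise generically under the hypotheses.

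For the \emph{only if} direction, argue by contrapositive: assume there exists $v\in\im^+(\Ddisp)\cap\im^+(\bar{A})$ that is linearly independent from $\bar{d}$, and construct a minimizer whose $\bs$ is not a positive multiple of $b$. Writing $v=\Ddisp c=\bar{A}\tb$ with $c,\tb\succeq 0$, I would define $\ta\succeq 0$ by distributing $c_l$ across the indices with $T^{(k)}=T_l$, namely $\ta_k=Kc_l/(n_lr_k)$, so that $\frac{1}{K}\sum_k\ta_k\delta^{(k)}=v=\bar{A}\tb$. Jointly rescaling $(\ta,\tb)$ to satisfy $z^\top\ta=1$ preserves $\Qavg=0$, and full column rank of $\bar{A}$ together with $v\not\in\RR_{\geq 0}\bar{d}$ ensures the resulting $\tb$ is not a positive multiple of $b$.

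The main obstacle I expect is the corner case $s=0$ in the \emph{if} direction: ruling out degenerate minimizers $\bs=0$ supported on zero-distance displayed trees requires the implicit non-degeneracy that at least one sampled $d(T^{(k)})$ is nonzero, which follows from $b\neq 0$ and $\bar{A}$ being full column rank once one checks that some sampled displayed tree must contain a calibrating edge of positive length. Beyond this technicality, the argument is essentially a translation of the non-negativity constraints $\hb\succeq 0,\ \ha\succeq 0$ into the geometry of the two cones $\im^+(\bar{A})$ and $\im^+(\Ddisp)$; note that without these constraints the natural object would be the vector-space intersection $\im(\Ddisp)\cap\im(\bar{A})$, which is generally larger than the cone intersection and would give a weaker identifiability condition.
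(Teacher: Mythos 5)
Your proposal is correct and follows essentially the same route as the paper: characterize the minimizers of $\Qavg$ via the identity $\bar{A}\bs=\frac{1}{K}\sum_k\as_k\delta^{(k)}=\Ddisp\bes\in\im^+(\Ddisp)\cap\im^+(\bar{A})$, use the full column rank of $\bar{A}$ when that cone intersection is one-dimensional, and, for the converse, distribute a second independent direction of the cone over the sampled genes to build an alternative zero of $\Qavg$. Your explicit witness for $\min\Qavg=0$ (reweighting each gene by $\gamma(T_l)/n_l$, which does not require the empirical tree frequencies to match $\gamma$) and your flagging of the degenerate $c=0$ minimizer are, if anything, more careful than the paper, which asserts $c>0$ at that step without comment.
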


\begin{proof}
  Let $V = \im^+(\Ddisp) \cap \im^+(\bar{A})$. By the definition of
  semidirected networks, every edge appears in some up-down path
  (or equivalently a path on a displayed tree) between some pair of taxa.
  Consequently, since $b \neq 0$, we have $d(T) = A(T) b \neq 0$ for some $T \in
  \mathcal{T}$, which implies
  $\bar{d} = \Ddisp \gt = \bar{A} b \neq 0 $, and necessarily
  $\dim(V) \geq 1$.

  Without loss of generality, assume $z^\top \alpha = 1$.  Clearly
  $\Qavg(\alpha, b) = 0$ and $(\alpha, b)$ is a minimizer to the optimization
  problem, under the assumption that all $\epsilon_{ijk} = 0$.
  Let $(\as, \bs)$ be a minimizer to the optimization problem. Then
  $\Qavg(\as, \bs) = 0$ and 
  $\frac{1}{K} \sum_k \as_k \delta^{(k)} = \bar{d}(T, \bs) = \bar{A} \bs$.
  Now we have
  \[\frac{1}{K} \sum_k \as_k \delta^{(k)} = 
    \frac{1}{K} \sum_{T \in \mathcal{T}} \left(\sum_{k=1}^K \indi\{T^{(k)} = T\} \as_k r_k\right) d(T, b) =
    \Ddisp \bes,
  \]
  where $\bes \succeq 0$ is the vector defined by
  \[\bes_l = \frac{1}{K} \sum_{k=1}^K \indi\{T^{(k)} = T_l\} \as_k r_k. \]
  Therefore
  $\bar{A} \bs = \Ddisp \bes \in V$.

  First, suppose that $\dim(V) = 1$.
  Then $\bar{A} \bs = c \bar{A} b$ for
  some $c > 0$, and then $\bs = c b$.

  For the converse, suppose that $\dim(V) > 1$.  Then we can find nonzero
  $(\tb, \tbe) \succeq 0$ such that $\bar{A} \tb = \Ddisp \tbe \neq c \bar{A}b$
  for any $c > 0$.  We can also find $\ta \succeq 0$ such that
  $\tbe_l = \frac{1}{K} \sum_k \indi\{T^{(k)} = T_l\} \ta_k r_k$.  Normalizing
  if necessary, we may assume $z^\top \ta = 1$.  We then have a minimizer
  $(\ta, \tb)$ that does not satisfy $\tb = cb$.
\end{proof}

It is important to note that the condition
$\dim\left( \im^+(\Ddisp) \cap \im^+(\bar{A}) \right) = 1$ depends
on the network topology but also on the true
value of $b$, because $\Ddisp$ depends on $b$.  Therefore, in contrast to
\Cref{thm:qdti}, this condition for $\Qavg$ to be consistent cannot be verified
before knowing the solution to the network calibration problem.

The condition is also closely related to identifiability of the metric, i.e.\
$\ell(e)$ and $\gamma(e)$ for edges $e$ of $N$, from average distances,
as detailed below.

If $\dim(\im^+(\Ddisp) \cap \im^+(\bar{A})) > 1$, then we can find $\gt' \succeq
0$ and $b' \succeq 0$ such that $\gt' \not\propto \gt$, $b' \not\propto b$,
${\|\gt'\|}_1 = 1$, and $\bar{A} b' = \Ddisp \gt'$.
Let $N_1$ be the network obtained from $N$ by setting the edge length vector
to $b'$ while keeping $\gamma(e)$ unchanged for all edges $e$.
Suppose we may adjust the inheritance parameters in $N$ so that
$(\gamma(T_1), \ldots, \gamma(T_l))$ becomes $\gt'$ while keeping edge lengths
$b$ as in $N$, and let this network be $N_2$.
Then the average distance vector on $N_1$ is $\bar{A} b'$ and
the average distance vector on $N_2$ is $\Ddisp \gt'$.
Since $\bar{A} b' = \Ddisp \gt'$, $N_1$ and $N_2$
have the same average distances, yet distinct edge parameters.
Therefore, the metric of $N$, given its topology
(the same topology as $N_1$ and $N_2$),
is not identifiable from average distances.

However, the condition on $N_2$ is very strong.
Under the displayed tree probability model
$\gamma(T) = \prod_{e \in T} \gamma(e)$, and so the vector
$\gt = (\gamma(T_1), \ldots, \gamma(T_l))$ is not free to be any
probability vector as we vary 
the values of $\gamma(e)$ for edges $e$ in $N$.
For some network topologies, $\gt$ covers the whole
probability simplex, for example if $N$ has only one hybrid node,
or two hybrid nodes that form a ``hybrid ladder'',
one being a parent of the other.
For these topologies, the argument in the previous paragraph shows
that the identifiability of the metric from average
distances (given knowledge of the network topology) implies that
$\dim(\im^+(\Ddisp) \cap \im^+(\bar{A})) = 1$.
Since such identifiability implies the identifiability of $b$ given
knowledge of both the topology and $\gt$, 
it also implies that $\bar{A}$ is full rank, since $\bar{d} = \bar{A} b$.
Then by \Cref{thm:qavg},
$\Qavg$ recovers true edge lengths in the absence of noise when all displayed
tree are sampled.
However, when $\gt$ does not cover the whole probability simplex,
it is unclear whether the identifiability of the metric given knowledge
of the network topology implies that $\Qavg$ can recover true edge lengths.

\section{Technical proofs}
\label{sec:techproofs}

We first introduce notations that will be used to prove \Cref{lem:tree-path-exist}.
A \emph{blob} $B$ in a rooted or semidirected network $N$ is a
$2$-edge-connected component of $N$, when $N$ is considered as an undirected graph.
The \emph{level} of a blob $B$ is the number of edges in $B$ one needs to remove
in order to obtain a tree.  The \emph{level} of a network is the maximum level
of all its blobs.  This definition of level is closely
related to the standard definition of blobs as biconnected components,
and is used in \citet{2023XuAne_identifiability}.
For binary networks, the two definitions coincide.

\begin{proof}[Proof of \Cref{lem:tree-path-exist}]
  Let $N^+$ be a rooted network obtained from rerooting $N$.  As noted in
  \citet{2023XuAne_identifiability}, the traditional level based on biconnected
  components is lower than or equal to the level defined here,
  therefore all biconnected components of $N^+$ are of level 1,
  as defined in \citet{2010HusonRuppScorn}. Consequently,
  and since $N^+$ has no 2-cycle as assumed \citet{2010HusonRuppScorn},
  it is a tree-child network by Lemma~6.11.11 in \citet{2010HusonRuppScorn}.
  
  We may assume the root $r$ of $N^+$ has at least two tree children: otherwise
  the single tree child $u$ of $r$ must be ancestor to all
  hybrid children of $r$, and we may
  reroot the network at $u$ instead.
  Let $u_1$ and $u_2$ be distinct tree children of $r$.
  Since $N^+$ is tree-child,
  we may take a tree path $p_1$ from $u_1$ to some leaf $x_1$,
  and another tree path from $u_2$ to some other leaf $x_2$.
  Joining them with the root in between forms a tree path
  between $x_1$ and $x_2$, as desired.
\end{proof}

\medskip

\begin{proof}[Details in the proof of \Cref{thm:erable} and \Cref{thm:qdti}]
  For \Cref{thm:erable}, we need to show that $Q(\ha, \hb \,; \epsilon)$
  converges uniformly to $Q(\ha, \hb \,; 0)$ on any compact subset of the
  domain.  It suffices to show uniform convergence when we have
  ${\|\ha\|}_1 \leq R$ and $\|\hb\| \leq R$ for arbitrary $R > 0$.

  To simplify notations, we write $\delta_0^{(k)}$ for the vector
  $r_k d(T^{(k)})$, which is the observed distance vector on gene $k$ when
  $\epsilon = 0$.
  For each gene $k$ we also write $\epsilon^{(k)}$ for the vector
  $(\epsilon_{ijk})_{i < j}$, such that the distance vector from
  gene $k$ is
  $\delta^{(k)} = \delta_0^{(k)} + \epsilon^{(k)}$.

  We can write \eqref{eq:erable} in a compact form:
  \[ Q(\ha, \hb \,; \epsilon) = \sum_k \left(\ha_k \delta\sk - A\hb\right)^\top W\sk
    \left(\ha_k \delta\sk - A\hb\right),
  \]
  where $A$ is the inheritance matrix of $T$, and $W\sk$ is matrix with
  $w_{ij}\sk$ on the diagonal and $0$ off-diagonal.

  Then we have
  \[| Q(\ha, \hb\,; \epsilon) - Q(\ha, \hb\,; 0) | \leq \sum_k \left( 2\ha_k
      \delta_0\sk - 2A\hb + \ha_k\epsilon\sk \right)^\top \left(W\sk \ha_k
      \epsilon\sk\right). \] 

  Let $M = \max_k\| \delta_0\sk \|$ and $W = \max_{i,j,k}w_{ij}\sk$, then for
  each summand, the norm of first term in the inner product is bounded by
  $R(2M + 2 \|A\|_2 + \|\epsilon\|)$, while the second term has norm bounded by
  $W R \|\epsilon\|$.  Uniform convergence follows by application of the
  Cauchy-Schwarz inequality.

  For \Cref{thm:qdti}, we similarly have
  \begin{align*}
    & | \Qdti(\ha, \hb\,; \epsilon) - \Qdti(\ha, \hb\,; 0) | \\
    \leq & \left[(I - \Pidti) \frac{1}{K}\sum_k\left( 2\ha_k
      \delta_0\sk - 2A\hb + \ha_k\epsilon\sk \right)\right]^\top (I - \Pidti)
           \left(\frac{1}{K}\sum_k\ha_k\epsilon\sk\right) \\
         & + \lambda \sum_k \left[\Pidti \left( 2\ha_k
      \delta_0\sk - 2A\hb + \ha_k\epsilon\sk \right) \right]^\top \Pidti
           \ha_k\epsilon\sk\\
    \leq & \left[\frac{1}{K}\sum_k\left( 2\ha_k
      \delta_0\sk - 2A\hb + \ha_k\epsilon\sk \right)\right]^\top
           \left(\frac{1}{K}\sum_k\ha_k\epsilon\sk\right) \\
   & + \lambda \sum_k \left( 2\ha_k
      \delta_0\sk - 2A\hb + \ha_k\epsilon\sk \right)^\top \left(\ha_k\epsilon\sk\right)
     \;.
  \end{align*}
  The norms of the first terms in the $k+1$ inner products are bounded by
  $R(2M + 2 \|\bar{A}\|_2 + \|\epsilon\|)$, while the second terms have norms
  bounded by $R \|\epsilon\|$.
  Again uniform convergence follows by application of the Cauchy-Schwarz inequality.
\end{proof}

\bibliographystyle{plainnat}
\bibliography{ref}

\end{document}